\documentclass[11pt]{article}
\usepackage[margin=1in]{geometry}
\usepackage{graphicx}
\usepackage{amsmath}
\usepackage{amsthm}
\usepackage{physics}
\usepackage{amssymb}
\usepackage{xcolor}
\usepackage{hyperref}
\usepackage{mathtools}
\usepackage{comment}
\usepackage{subcaption}
\usepackage{enumitem}
\usepackage[normalem]{ulem}

\hypersetup{
    colorlinks,
    linkcolor={red!50!black},
    citecolor={blue!50!black},
    urlcolor={blue!80!black}
}

\theoremstyle{plain}
\newtheorem{theorem}{Theorem}[section]
\newtheorem{lemma}[theorem]{Lemma}
\newtheorem{proposition}[theorem]{Proposition}
\newtheorem{corollary}[theorem]{Corollary}

\theoremstyle{definition}
\newtheorem{definition}[theorem]{Definition}

\theoremstyle{remark}

\newtheorem{remark}[theorem]{Remark}
\newtheorem{assumption}{Assumption}

\numberwithin{equation}{section}

\usepackage{dsfont}
\newcommand{\indicator}[1]{\mathds{1}_{#1}}
\newcommand{\set}[1]{\left\{#1\right\}}

\renewcommand{\P}{\mathbb{P}}
\newcommand{\R}{\mathbb{R}}

\newcommand{\Z}{\mathbb{Z}}
\newcommand{\bG}{\mathbb{G}}
\newcommand{\bK}{\mathbb{K}}
\newcommand{\bL}{\mathbb{L}}

\newcommand{\cF}{\mathcal{F}}

\newcommand{\cL}{\mathcal{L}}

\newcommand{\cT}{\mathcal{T}}

\newcommand{\rect}[2][]{
  \if\relax\detokenize{#1}\relax
    \mathrm{R}_{#2}
  \else
    \mathrm{R}_{#2,#1}
  \fi
}

\newcommand{\supp}[1]{\mathrm{supp}(#1)}

\newcommand{\Area}[1]{\mathrm{Area}(#1)}

\newcommand{\intr}[1]{\mathrm{int}(#1)}

\newcommand{\onion}[2][n]{#2^{(#1)}}

\let\emptyset\varnothing

\allowdisplaybreaks

\title{Unified criteria for crystallization in hard-core lattice systems with applications to polyomino fluids and multi-component mixtures}
\author{Qidong He}
\date{}

\begin{document}

\maketitle

\begin{abstract}
    We present a unified extension of two sets of criteria for high-fugacity crystallization in hard-core lattice systems developed previously by Jauslin, Lebowitz, and the author.
    Our new criterion is formulated in terms of the existence of a volume allocation rule with desirable optimization and screening properties, in analogy to the scoring function constructed in Hales' proof of the Kepler conjecture.
    Notably, our result applies to a large class of polyomino models with discrete rotational degrees of freedom and their chiral mixtures, as well as multi-component mixtures featuring several geometrically distinct particle shapes.
    The proof uses a recent systematic extension of Pirogov--Sinai theory to systems with infinite interactions by Mazel--Stuhl--Suhov.
\end{abstract}

\section{Introduction}

Recently, Mazel--Stuhl--Suhov~\cite{mazel2024pirogov} systematically extended the version of Pirogov--Sinai theory~\cite{Pirogov1975} due to Zahradn\'ik's~\cite{zahradnik1984alternate} to lattice systems with infinite interactions.
In this note, we use their result to generalize the criteria developed by Jauslin, Lebowitz, and the author in~\cite{Jauslin2018,he2024high} for high-fugacity crystallization in hard-core lattice systems.
We assume that the reader is familiar with the main ideas of Pirogov--Sinai theory and refer to~\cite{friedli2017statistical,helmuth2019algorithmic,cannon2024pirogov} for pedagogical references and recent developments.
We note that although~\cite{Jauslin2018,he2024high} were also based on Pirogov--Sinai theory, the implementation therein relied on very specific geometric properties of the models and led to various technical assumptions that limited their applicability, but which can be overcome using the framework of~\cite{mazel2024pirogov}, as we now explain.

One example of such an assumption is the requirement that the close-packings of the particles be related by isometries of the underlying lattice~\cite[Definition 1]{Jauslin2018},~\cite[Assumption~1.3]{he2024high}.
This condition was essential to ensuring the cancellation of the bulk terms in ratios of partition functions with different boundary conditions.
The reason was that contours in~\cite{Jauslin2018,he2024high} were constructed on the level of particles using a geometric definition of correctness that is highly sensitive to the local packing structure.
As Mazel--Stuhl--Suhov pointed out in a general context~\cite{mazel2024pirogov}, a more elegant approach is to first coarse-grain the system on the scale of a \textit{supercell} (see Section~\ref{sec:the assumptions}) common to all periodic close-packings, and then construct contours on the level of supercells using the usual notion of correctness---that a supercell and all its neighbors agree with the same periodic ground state.
This approach automatically guarantees the desired cancellation and removes the need for the isometry condition as well as~\cite[Assumption~1.5]{he2024high}, which addressed similar geometric issues.
Thus, it paves the way for treating models featuring several translationally distinct particle shapes, and those admitting several non-congruent crystal structures; see Figure~\ref{fig:Z pentomino crystals}.

\begin{figure}[t]
    \centering
    \begin{subfigure}{0.3\columnwidth}
        \centering
        \includegraphics[scale=0.3]{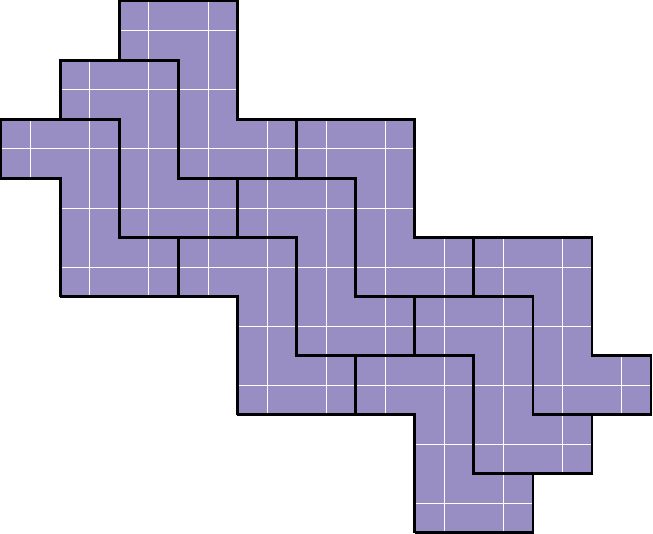}
        \caption{}
    \end{subfigure}
    \begin{subfigure}{0.3\columnwidth}
        \centering
        \includegraphics[scale=0.3]{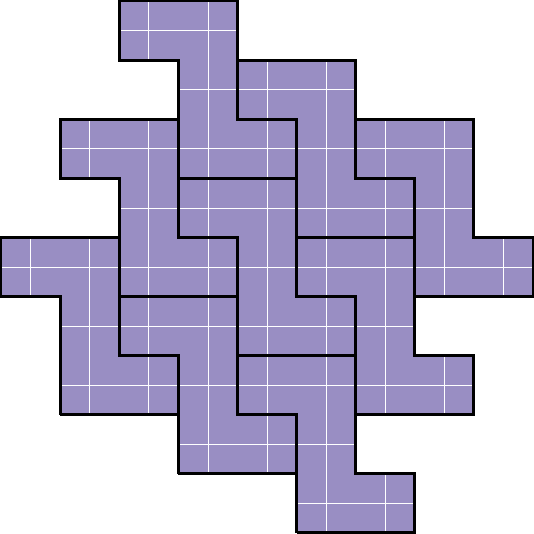}
        \caption{}
    \end{subfigure}
    \begin{subfigure}{0.32\columnwidth}
        \centering
        \includegraphics[scale=0.3]{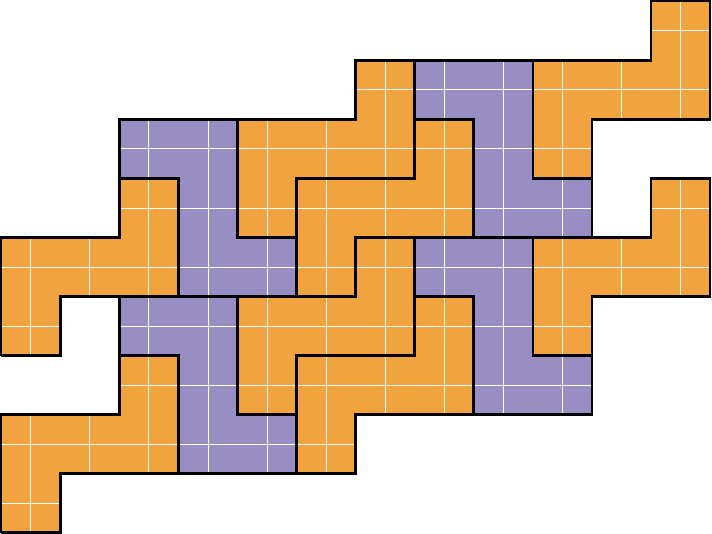}
        \caption{}
    \end{subfigure}

    \vspace{0.5cm}

    \begin{subfigure}{0.3\columnwidth}
        \centering
        \includegraphics[scale=0.3]{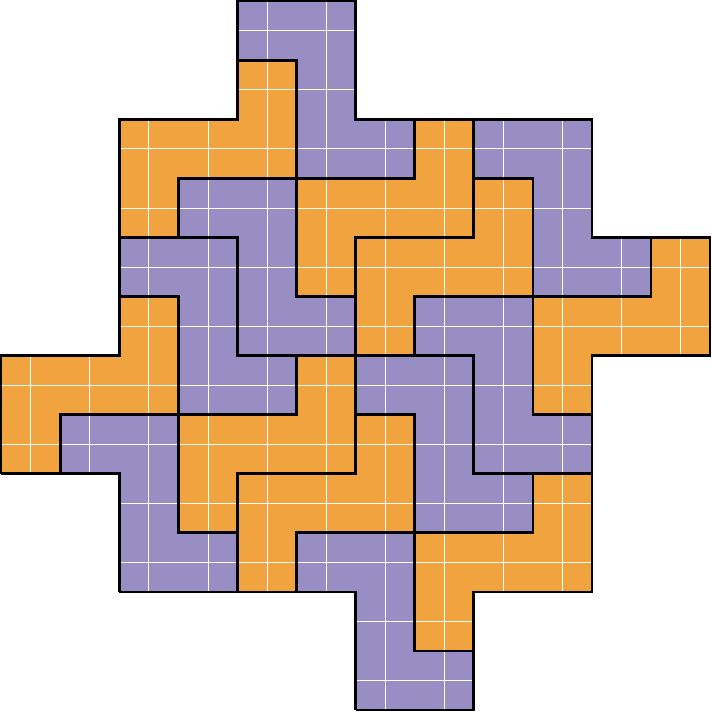}
        \caption{}
    \end{subfigure}
    \begin{subfigure}{0.3\columnwidth}
        \centering
        \includegraphics[scale=0.3]{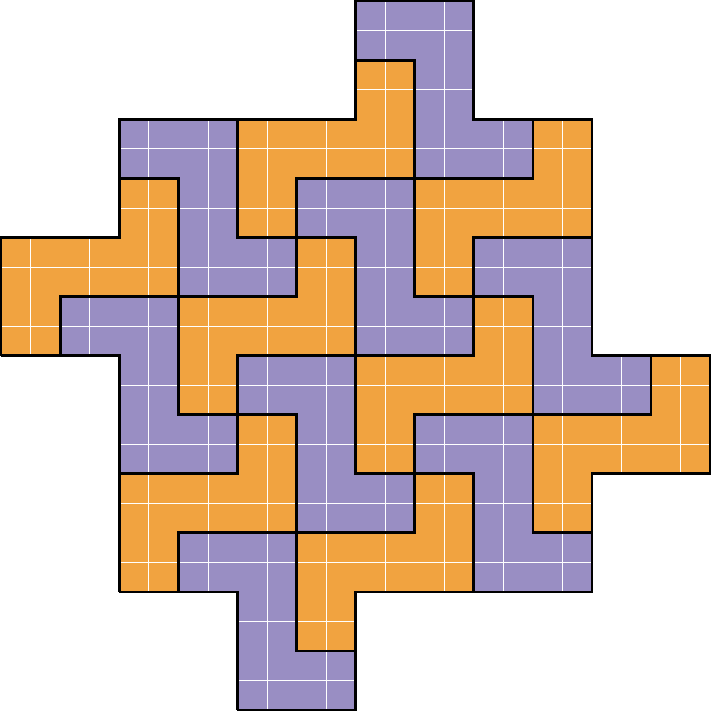}
        \caption{}
    \end{subfigure}
    \begin{subfigure}{0.32\columnwidth}
        \centering
        \includegraphics[scale=0.3]{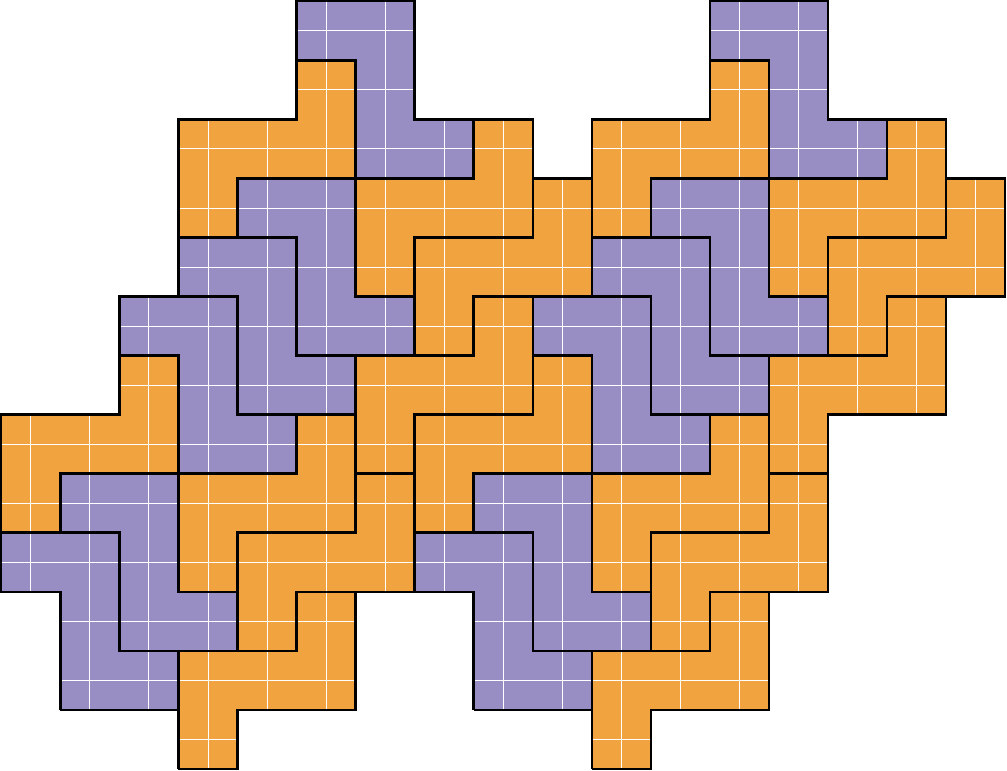}
        \caption{}
    \end{subfigure}
    \caption{The only crystal (tiling) structures of the chiral $Z$-pentomino on the square lattice up to translations and rotations~\cite[Chapter~6]{martin1991polyominoes}}
    \label{fig:Z pentomino crystals}
\end{figure}

Another point of generalization comes from the verification of the Peierls condition.
In~\cite{Jauslin2018}, the close-packings were required to cover every site of the lattice, and the Peierls condition held because any contour must contain at least a proportional number of uncovered sites.
With applications to non-tiling models in mind,~\cite{he2024high} instead used the discrete Voronoi tessellation to identify deviations from close-packings, defined as regions featuring an increase in the volumes of the (discrete) Voronoi cells, and the Peierls condition followed by assuming that the close-packings simultaneously minimized the volume of all the Voronoi cells.
In their series of works on crystallization in disk and sphere models on lattices~\cite{mazel2019high,mazel2020hard,mazel2021kepler,mazel2023hard,mazel2025high}, Mazel--Stuhl--Suhov identified similar minimization properties of \textit{local volumes} by considering other common tessellations, such as the Delaunay triangulation with respect to the geometric centers of the particles.
Their insight is proof enough that the formulation of the minimization property need not be tied to a specific measure of local volume.
Building on this idea, we formulate our new criterion in terms of the existence of a \textit{volume allocation} rule with desirable properties to make it versatile in applications.

An archetypal example of a system covered by the present note is that of the chiral $Z$-pentomino on the square lattice with discrete rotational degrees of freedom, which has been studied in the computational chemistry literature~\cite{barnes2009structure,barnes2010development} and is known to admit a remarkable six non-congruent crystal structures; see Figure~\ref{fig:Z pentomino crystals}.
A plethora of other polyomino models with discrete rotational degrees of freedom, as well as chiral mixtures thereof, also fall within our framework.
Examples may be found in~\cite[Chapters~7--9]{martin1991polyominoes}; see Figure~\ref{fig:other shapes} for a few and Section~\ref{sec:polyominoes} for discussion.
Finally, in Section~\ref{sec:diamond-octagon mixtures}, we give an example of a multi-component mixture, featuring diamonds and octagons, that crystallizes at fine-tuned chemical potentials and low temperatures, with one of two possible crystal structures given by the well-known truncated square tiling of the plane; see Figure~\ref{fig:truncated square tiling}.

\begin{remark}
    The initial motivation for this work came from the grand-canonical Monte Carlo simulations of polyomino fluids reported in the doctoral thesis of Barnes~\cite{barnes2010development}.
    We strongly encourage the reader to examine the beautiful pictures therein, specifically~\cite[Figures~5.13--16]{barnes2010development} which showcase the structural evolution of the $Z$-pentomino fluid across the crystallization transition.
    Beyond the transition point, the simulated system on a torus settles into a polycrystalline state, where regions with distinct crystal structures depicted in Figure~\ref{fig:Z pentomino crystals} are separated by long-lived domain boundaries.
    Corollary~\ref{cor:polyominoes} effectively confirms these findings by proving that the crystal structures observed in~\cite[Figures~5.14--16]{barnes2010development} are not artifacts of the finite torus geometry, but indeed correspond to stable infinite-volume phases of the system.
\end{remark}

\begin{figure}[t]
    \centering
    \captionsetup[subfigure]{justification=centering}
    \begin{subfigure}[b]{0.15\columnwidth}
        \centering
        \includegraphics[scale=0.5]{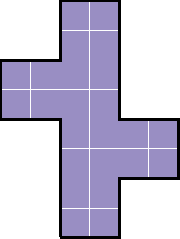}
        \caption{$2$-poic}
    \end{subfigure}
    \begin{subfigure}[b]{0.23\columnwidth}
        \centering
        \includegraphics[scale=0.5]{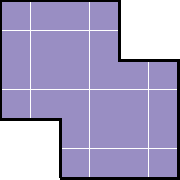}
        \caption{$2$-morphic, $2$-poic}
    \end{subfigure}
    \begin{subfigure}[b]{0.23\columnwidth}
        \centering
        \includegraphics[scale=0.5]{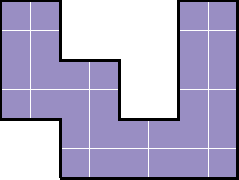}
        \caption{$4$-morphic, $4$-poic}
    \end{subfigure}
    \begin{subfigure}[b]{0.3\columnwidth}
        \centering
        \includegraphics[scale=0.5]{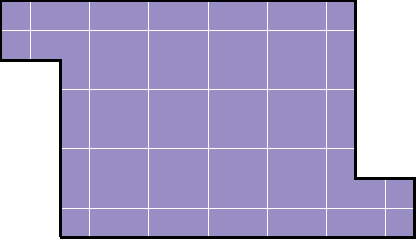}
        \caption{$3$-morphic, $2$-poic}
    \end{subfigure}
    \caption{Other examples of polyominoes that tile $\Z^2$ in multiple non-congruent ways when rotations, or reflections as well, are allowed~\cite[Problems~7.15, 8.10, 9.3, 9.8]{martin1991polyominoes}}
    \label{fig:other shapes}
\end{figure}

\section{Volume allocations and crystallization}

\subsection{The model}
\label{sec:the model}

Let $B$ be a non-degenerate $d\times d$ matrix, $\bL:=B\cdot\Z^d$, and $\bG$ be the union of finitely many disjoint translations of $\bL$.
Let $T_1,\dots,T_k$ be non-empty, bounded subsets of $\R^d$, which we refer to as \textbf{tiles}.
Fix a \emph{unit} vector $\vec{\mu}=(\mu_1,\dots,\mu_k)\in\R^k$, whose components are interpreted as the chemical potentials of the tiles.
An \textbf{admissible configuration} is a function $\omega:\bG\to\set{0,1,\dots,k}$ such that 
\begin{equation}
    \intr{u+T_{\omega(u)}}\cap\intr{v+T_{\omega(v)}}=\emptyset\quad\text{ for every distinct }u,v\in\bG\text{ such that }\omega(u),\omega(v)\ne 0,
\end{equation}
where $\intr{\cdot}$ denotes the interior of a set.
We identify an admissible configuration $\omega$ with the set 
\begin{equation}
    X=X(\omega):=\set{(v,\omega(v))\mid v\in\bG,\omega(v)\ne 0},
\end{equation}
and write $X_\Lambda:=X\cap\Lambda$ for $\Lambda\subset\R^d$, and $T_x:=v+T_{\omega(v)}$, $\mu_x:=\mu_{\omega(v)}$ for $x=(v,\omega(v))\in X(\omega)$.
Let $\Omega$ be the set of admissible configurations and equip it with the (metrizable) topology of local convergence~\cite[Section~4.1]{georgii2011gibbs}: Under this topology, two admissible configurations are \emph{close} if they agree on a large ball centered at the origin.

The formal Hamiltonian of an admissible configuration $X\in\Omega$ is defined as
\begin{equation}
    \label{eqn:formal Hamiltonian}
    H(X):=-\sum_{x\in X}\mu_x+\sum_{\set{x,y}\subset X}\infty\cdot\indicator{\intr{T_x}\cap\intr{T_y}\ne\emptyset},
\end{equation}
where we interpret $\infty\cdot 0:=0$.
The Gibbs measure(s) of the model~\eqref{eqn:formal Hamiltonian} at inverse temperature $\beta$ are defined via the usual Dobrushin--Lanford--Ruelle prescription~\cite{biskup2009reflection,georgii2011gibbs}.
In this paper, we assume that $\mu_1,\dots,\mu_k>0$ and focus on the regime $\beta\gg 1$.

\subsection{The assumptions}
\label{sec:the assumptions}

\begin{definition}
    \label{def:volume allocation}
    Let $S\subset\R^d$ be an $\bL$-invariant space with an $\bL$-invariant measure $\lambda$ (which need not be the Lebesgue measure).
    A \textbf{volume allocation} is a family of measurable functions $\phi_x(\cdot\mid X):S\to[0,1]$, defined for every non-empty admissible configuration $X\in\Omega$ and $x\in X$, satisfying the following properties:
    \begin{enumerate}
        \item \textit{Translation covariance}: For every $t\in\bL$, 
        \begin{equation}
            \label{eqn:translation covariance}
            \phi_{t+x}(t+\cdot\mid t+X)=\phi_x(\cdot\mid X)\quad\lambda\text{-a.e.}
        \end{equation}
        \item \textit{Lower semi-continuity}: For all sequences of admissible configurations $(X_n)_{n\ge 1}$ converging to $X$,
        \begin{equation}
            \label{eqn:lower semi-continuity}
            \liminf_{n\to\infty}\phi_x(\cdot\mid X_n)\ge \phi_x(\cdot\mid X)\quad\lambda\text{-a.e.}
        \end{equation}
        \item \textit{Partition of unity}: 
        \begin{equation}
            \label{eqn:partition of unity}
            \sum_{x\in X}\phi_x(\cdot\mid X)=1\quad\lambda\text{-a.e.}
        \end{equation}
    \end{enumerate}
\end{definition}

Given a volume allocation $\set{\phi_x(\cdot\mid X)}$, the \textbf{allocated volume} to $x\in X$ in a non-empty admissible configuration $X\in\Omega$ is defined as
\begin{equation}
    v(x\mid X):=\int_S\lambda(\dd{s})\phi_x(s\mid X).
\end{equation}
We introduce the following shorthand notation: For every non-empty admissible configuration $X\in\Omega$ and $X'\subset X$, write
\begin{equation}
    v(X'\mid X):=\sum_{x\in X'}v(x\mid X).
\end{equation}

\begin{assumption}
    \label{asm:optimization}
    There exists a volume allocation $\set{\phi_x(\cdot\mid X)}$ and a constant $p_\ast>0$ such that:
    \begin{enumerate}[label=(\alph*),ref=\theassumption(\alph*)]
        \item \label{itm:minorization} \textit{Local optimization}: For every non-empty admissible configuration $X\in\Omega$ and $x\in X$, 
        \begin{equation}
            p_\ast\, v(x\mid X)\ge \mu_x.
        \end{equation}
        \item \label{itm:global optimization} \textit{Global optimization}: There exist only finitely many non-empty admissible configurations $\overline{X}\in\Omega$, referred to as \textbf{perfect configurations}, such that 
        \begin{equation}
            p_\ast\, v(x\mid\overline{X})=\mu_x\quad\text{for every }x\in \overline{X}.
        \end{equation}
    \end{enumerate}
\end{assumption}

To state the next assumption, we first introduce some terminology.
A non-empty admissible configuration is \textbf{periodic} if it is invariant under translations by a full-rank sublattice of $\bL$.
Let $\bK\subset\bL$ be a full-rank sublattice.
A \textbf{supercell} of $\bK$ is the region $A\cdot[0,1)^d\subset\R^d$ for a full-rank sublattice $A\cdot\Z^d\subset\bK$.
Conversely, given a non-degenerate parallelepiped $G=A\cdot[0,1)^d$, define $\cL^{G}:=A\cdot\Z^d$ as the \textbf{grid} of $G$, and denote by $\cT^G:=\cL^G\cdot G$ the set of translations of $G$ under $\cL^G$, and $\cF^G:=\sigma(\cT^G)$ the $\sigma$-algebra generated by $\cT^G$.
A supercell of a periodic configuration $X$ is a supercell of the largest sublattice of $\bL$ under which $X$ is invariant.
A \textbf{fundamental parallelepiped} of $\bK$ is a supercell of $\bK$ of minimum volume.

By Proposition~\ref{prop:perfect iff periodic ground state}, a perfect configuration is necessarily periodic.
Let $F\subset\R^d$ be a supercell common to all perfect configurations, and abbreviate $\cF:=\cF^F$.
Given $\Lambda\in\cF$, let $\partial_F\Lambda$ denote the smallest $\cF$-measurable subset of $\Lambda$ whose closure intersects the closure of $\Lambda^c$.
Given a parallelepiped $G\subset\R^d$, for $n\ge 0$, let $\onion[n]{G}\subset\R^d$ be the union of $(2n+1)^d$ copies of $G$ arranged in a parallelepiped centered at $G$, such that there are exactly $2n+1$ copies of $G$ along each direction.

Under the coarse-graining scheme of~\cite{mazel2024pirogov}, one chooses a supercell $F$ common to all (finitely many) \emph{periodic ground states} (see Proposition~\ref{prop:perfect iff periodic ground state}) of the model and \textit{larger than the interaction radius} in the following sense: If $\Lambda,\Lambda'\in\cF$ have disjoint closures and $X_\Lambda,X'_{\Lambda'}$ are admissible configurations, then
\begin{equation}
    \label{eqn:interaction screening}
    H(X_\Lambda\mid X'_{\Lambda'})=H(X'_{\Lambda'}\mid X_\Lambda)=0.
\end{equation} 
We refer to the elements of $\cT^F$ as \textbf{cells}.
Thus, the microscopic hard-core interaction between tiles can be embedded in a mesoscopic interaction between cells with non-disjoint closures, yielding an equivalent theory on $\cL^F\cong\Z^d$ with first- and second-nearest-neighbor interactions and a much larger single spin space.
Then, the notion of \textbf{$\overline{X}$-correctness} is defined in the usual way for every periodic ground state $\overline{X}\in\Omega$: A cell $K\in\cT^F$ is $\overline{X}$-correct with respect to an admissible configuration $X$ if $X_{\onion[1]{K}}=\overline{X}_{\onion[1]{K}}$.

\begin{assumption}
    \label{asm:supercell}
    There exists a choice of a supercell $F$ common to all \textit{perfect} configurations (see Proposition~\ref{prop:perfect iff periodic ground state}) and larger than the interaction radius such that:
    \begin{enumerate}[label=(\alph*),ref=\theassumption(\alph*)]
        \item \label{itm:volume screening} \textit{Allocation screening by a perfect configuration}: If $\overline{X}$ is a perfect configuration and $F$ is $\overline{X}$-correct with respect to an admissible configuration $X$, then for every $x\in X_F$,
        \begin{equation}
            \phi_x(\cdot\mid X)=\phi_x(\cdot\mid\overline{X})\quad\lambda\text{-a.e.}
        \end{equation}
    \end{enumerate}
\end{assumption}

\begin{remark}
    The properties imposed on the volume allocation, both in Definition~\ref{def:volume allocation} and Assumptions~\ref{asm:optimization} and~\ref{asm:supercell}, are inspired by those of the scoring function constructed in Hales' proof of the Kepler conjecture~\cite{hales2005proof}, which was later adapted by Mazel--Stuhl--Suhov~\cite{mazel2021kepler,mazel2023hard} to study sphere packings on $\Z^3$.
    Fundamentally, all of these approaches solve the crystallization problem by distributing the ambient space among the particles, thereby reducing the difficult problem of achieving global density maximization to the simpler problem of ensuring local volume minimization.
    Concretely, we point out the following specific analogies:
    \begin{enumerate}
        \item Translation covariance~\eqref{eqn:translation covariance} is explicitly noted in~\cite[Remark~2.11]{hales2005proof}.
        \item Hales' scoring function is in fact continuous~\cite[Lemma~5.9]{hales2005proof}, while lower semi-continuity~\eqref{eqn:lower semi-continuity} is sufficient for our purposes.
        \item The optimization properties in Assumption~\ref{asm:optimization} correspond to~\cite[Theorem~6.1, Corollary~6.3]{hales2005proof}.
        \item The formulation of the screening property in Assumption~\ref{asm:supercell} are based on that of~\cite[Property~9.(IV)]{mazel2021kepler}, which in turn refers to~\cite[Lemma~5.10, Theorem~5.11]{hales2005proof}.
    \end{enumerate}
\end{remark}

\subsection{Crystallization}

In this subsection, we use the framework of~\cite{mazel2024pirogov} to prove high-fugacity crystallization in models satisfying Assumptions~\ref{asm:optimization} and~\ref{asm:supercell}.
The proof consists of two main steps: identifying the periodic ground states, and verifying the Peierls condition.
Each is carried out in a separate sub-subsection below.

\begin{theorem}
    \label{thm:main}
    Under Assumptions~\ref{asm:optimization} and~\ref{asm:supercell}, there exists a universal constant $\beta_0>0$ such that the following holds.
    Index the perfect configurations by $q$.
    For all $\beta\ge \beta_0$ and every perfect configuration $\overline{X}^q$, the model~\eqref{eqn:formal Hamiltonian} admits an $\cL^F$-invariant, extremal Gibbs measure $\P_\beta^q$ such that for every perfect configuration $\overline{X}^{q'}$,
    \begin{equation}
        \P_\beta^q(F\text{ is }q'\text{-correct})
        =\begin{cases}
            1+\order{z^{-1}} & \text{if }q'=q \\
            \order{z^{-1}} & \text{otherwise}
        \end{cases}
        ,
    \end{equation}
    where $z:=\exp{\beta\min_{1\le i\le k}\mu_i}$.
\end{theorem}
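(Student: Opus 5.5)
We will obtain the statement by reducing to the main theorem of Mazel--Stuhl--Suhov~\cite{mazel2024pirogov}. Their theorem asserts the conclusion for hard-core lattice systems, coarse-grained on a supercell $F$ larger than the interaction radius, under three hypotheses: (i) the model has finitely many periodic ground states, all invariant under $\cL^F$; (ii) the ground states are related by a symmetry, or, as in their general version, there are finitely many of them and all have the same specific energy; (iii) a Peierls condition holds. The Peierls condition states that any admissible configuration $X$ which agrees with a periodic ground state outside a finite region has energy excess, relative to that ground state, of at least $\rho\,\mu_{\min}$ times the number of incorrect cells, for some $\rho>0$ independent of $\beta$. With $z=e^{\beta\min_i\mu_i}$, each incorrect cell then costs a factor $z^{-\rho}$. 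Since the number of incorrect cells is bounded by their area, the estimate $\P^q_\beta(F \text{ is } q'\text{-correct})$ follows from a standard contour expansion. After rescaling the Peierls constant, this gives the $\order{z^{-1}}$ error.

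\textbf{Step 1: identify the ground states.} We use Proposition~\ref{prop:perfect iff periodic ground state}: the perfect configurations are exactly the periodic ground states. The argument runs as follows. Local optimization together with partition of unity shows that, in any finite window, the energy is bounded below by $-p_\ast$ times the volume, up to a boundary term. Equality per site holds precisely when $p_\ast v(x\mid X)=\mu_x$ for all $x$. Global optimization then gives finiteness of the set of perfect configurations, and translation covariance implies they are periodic, since a translate of a perfect configuration is again perfect. Because all perfect configurations have the same specific free energy $-p_\ast$ per unit $\lambda$-volume, no symmetry between them is needed. This is precisely the advantage of the coarse-graining in~\cite{mazel2024pirogov}.

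\textbf{Step 2: verify the Peierls condition.} This is the main step. Let $X$ coincide with $\overline{X}^q$ outside a finite $\Lambda\in\cF$, and compute $H(X_\Lambda)-H(\overline{X}^q_\Lambda)$ via the allocation. By partition of unity and $\bL$-invariance, we can write
\[
-\sum_{x}\mu_x=-p_\ast\lambda(\text{region})+\sum_x\bigl(p_\ast v(x\mid X)-\mu_x\bigr),
\]
where the sum runs over particles of $X$ assigned to $\Lambda$ together with a boundary layer. Allocation screening (Assumption~\ref{itm:volume screening}) says that in correct cells the allocations agree with those of some perfect configuration. Hence, after comparison with $\overline{X}^q$, the bulk terms and the contributions from correct cells cancel exactly. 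What remains is the sum of the nonnegative defects $p_\ast v(x\mid X)-\mu_x$ over particles in incorrect cells.

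It then suffices to show the following bound: there is $\delta>0$ such that every incorrect cell $K$ satisfies $\sum_{x\in X_{\onion[1]{K}}}\bigl(p_\ast v(x\mid X)-\mu_x\bigr)\ge\delta$. Each particle is counted in boundedly many such blocks, so this yields the Peierls bound with $\rho\propto\delta$.

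\textbf{Step 3: the compactness argument for $\delta$.} We prove the uniform bound by compactness, and this is where lower semi-continuity is used. Suppose $X_n$ is a sequence with the block defect tending to $0$ at an incorrect cell. Translate so the cell is $F$ itself; by translation covariance and $\cL^F$-invariance the defect is unchanged. Since $\Omega$ is compact, we may extract a convergent subsequence $X_n\to X$. The limit must be handled somewhat carefully. Lower semi-continuity of $\phi_x$, combined with Fatou's lemma, gives $v(x\mid X)\le\liminf_n v(x\mid X_n)$. This inequality points the wrong way for passing the defect to the limit. So we pass instead to the complementary quantity: we use partition of unity to express the defect through the allocations of the other particles, which turns the inequality around. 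If this fails, we fall back on a finite-range argument: the configurations in $\onion[1]{F}$ form a finite set, since tiles are bounded and $\bG$ is discrete. The block defect is then a function of a local pattern plus a semicontinuous tail, and a minimum over finitely many patterns is positive unless some pattern extends to a configuration with zero defect everywhere, which would be perfect.

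In either route, the limit $X$ has zero defect on the block. Iterating the argument over growing balls by a diagonal extraction, we obtain a configuration with zero defect that agrees near $F$ with a configuration that is not correct for any perfect configuration. This contradicts global optimization. We expect securing this uniform $\delta$, namely the interplay between semicontinuity and locality, to be the main obstacle.
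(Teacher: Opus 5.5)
Your overall plan matches the paper's: reduce to~\cite{mazel2024pirogov}, identify periodic ground states with perfect configurations, and get the Peierls bound from exact cancellation of allocated volume plus a uniform defect near every incorrect cell, found by compactness. The gap is in that last ingredient, which is false as you state it. The bound $\sum_{x\in X_{\onion[1]{K}}}\bigl(p_\ast v(x\mid X)-\mu_x\bigr)\ge\delta$ fails whenever $X_{\onion[1]{K}}=\emptyset$. To see this, delete from a perfect $\overline{X}$ all of its particles in $\onion[1]{K}$. Then $K$ is incorrect, because every perfect configuration is $\cL^F$-invariant and non-empty and hence has particles in every cell; yet your sum is empty. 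Inside a large vacancy the excess volume is carried by particles on the vacancy's boundary, arbitrarily far from $K$, so no bounded-radius local statement can cover this case. Even when $X_{\onion[1]{K}}\neq\emptyset$, nothing in the assumptions forces the defect into $\onion[1]{K}$: a locally defect-free but non-extendable patch is not excluded. This is why Lemma~\ref{lem:defect nearby} exempts the case $X_{\onion[1]{K}}=\emptyset$ and only places the defect in $\onion[n]{K}$, for some $n$ produced by compactness. It is also why the proof of Proposition~\ref{prop:Peierls condition} needs a step you do not have. If $X_{\supp{\Gamma}}\cap\onion[1]{K}=\emptyset$ for a cell $K\subset\supp{\Gamma}$, one inserts a particle in $K$. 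This is admissible by~\eqref{eqn:interaction screening} and only lowers the energy since $\mu_i>0$. One may then assume no cell of the support has an empty $1$-neighborhood, and counting multiplicities gives $\tau=\delta/(2n+1)^d$.

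Your compactness step is also misdirected. Lower semi-continuity points the right way. If $p_\ast v(x\mid X_n)<\mu_x+1/n$ and $X_n\to X$, Fatou gives $p_\ast v(x\mid X)\le\liminf_n p_\ast v(x\mid X_n)\le\mu_x$. Local optimization gives the reverse inequality, so the limit has zero defect at $x$, and no ``complementary quantity'' is needed. Your finite-pattern fallback does not work either. The value $v(x\mid X)$ is not determined by $X_{\onion[1]{F}}$, and zero defect on a block does not imply extendability to a zero-defect configuration. What the argument needs is the right contradiction hypothesis: for each $n$, a configuration $X_n$ with $F$ incorrect, $X_n\cap\onion[1]{F}\neq\emptyset$, and defect $<1/n$ at every particle of $X_n\cap\onion[n]{F}$. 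The growing windows give the limit zero defect at \emph{every} particle, which a hypothesis on the fixed block $\onion[1]{F}$ plus diagonal extraction cannot produce. The non-emptiness rules out the empty limit, which has vacuously zero defect but is not perfect. The limit is then perfect by definition. Since $X_n$ agrees with it on $\onion[1]{F}$ for large $n$, $F$ is correct for $X_n$. The contradiction is therefore with the incorrectness of $F$, not with global optimization.
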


\begin{remark}
    Further properties of the Gibbs measures $\P_\beta^q$ and the convergence of an expansion for the pressure of the system in inverse powers of the fugacities $z_i:=e^{\beta\mu_i}$, $1\le i\le k$, are readily extracted from~\cite[Corollary]{mazel2024pirogov}.
\end{remark}

\subsubsection{Periodic ground states}

In this sub-subsection, we identify the periodic ground states of the system as precisely the perfect configurations under Assumptions~\ref{asm:optimization} and~\ref{asm:supercell}.

\begin{proposition}
    \label{prop:perfect iff periodic ground state}
    Every perfect configuration $\overline{X}$ is periodic and a \textbf{ground state} of the system, i.e., for every bounded $\Lambda\subset\bG$ and admissible configuration $X$ coinciding with $\overline{X}$ on $\Lambda^c$,
    \begin{equation}
        H(X_\Lambda\mid \overline{X}_{\Lambda^c})-H(\overline{X}_\Lambda\mid \overline{X}_{\Lambda^c})\ge 0.
    \end{equation}
    Conversely, every periodic ground state is a perfect configuration.
\end{proposition}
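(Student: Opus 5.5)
Three things have to be shown: perfect configurations are periodic, they are ground states, and periodic ground states are perfect. Periodicity comes from translation covariance together with the finiteness in Assumption~\ref{itm:global optimization}. If $\overline{X}$ is perfect and $t\in\bL$, then by \eqref{eqn:translation covariance} we have $v(t+x\mid t+\overline{X})=v(x\mid\overline{X})$, and $\mu_{t+x}=\mu_x$. Hence $t+\overline{X}$ is again perfect. The orbit $\{t+\overline{X}\}_{t\in\bL}$ is therefore finite, so the stabilizer of $\overline{X}$ in $\bL$ has finite index. It is thus a full-rank sublattice, and $\overline{X}$ is periodic.

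For the ground-state property, fix a bounded $\Lambda$ and an admissible $X$ with $X=\overline{X}$ off $\Lambda$. The energy difference is $\sum_{x\in\overline{X}_\Lambda}\mu_x-\sum_{x\in X_\Lambda}\mu_x$, because both configurations are admissible and the hard-core terms vanish. The idea is to compare allocated volumes over a large region in which the two configurations' allocations differ only near $\Lambda$.

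Choose a large cell-measurable region $\Delta\in\cF$ containing $\Lambda$ together with a collar of cells. Every cell $K$ with $\onion[1]{K}\cap\Lambda=\emptyset$ is then $\overline{X}$-correct for $X$. By Assumption~\ref{itm:volume screening} applied to $K$ (after translating by $\cL^F$ using covariance and periodicity; $F$ is a supercell of $\overline{X}$, so $\overline{X}$ is $\cL^F$-invariant), we get $\phi_x(\cdot\mid X)=\phi_x(\cdot\mid\overline{X})$ for every $x\in X_K$. So the allocations agree for all particles outside a bounded set $D\supset\Lambda$, and we write $X_D$, $\overline{X}_D$ for the particles in $D$. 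By partition of unity, $\sum_{x\in X_D}\phi_x(\cdot\mid X)=\sum_{x\in\overline{X}_D}\phi_x(\cdot\mid\overline{X})$ $\lambda$-a.e. Both sides are bounded by $1$ and vanish far away, since the outside particles already account for the full unit there. I expect the justification that $\sum_{x\in\overline{X}_D}\phi_x(\cdot\mid\overline{X})$ is integrable, i.e.\ that $v(\overline{X}_D\mid\overline{X})<\infty$, to be the main technical point. It should follow from local optimization if each $v$ is finite, which is part of the definition being meaningful.

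Integrating gives $v(X_D\mid X)=v(\overline{X}_D\mid\overline{X})$. Applying Assumption~\ref{itm:minorization} and perfection then gives
\begin{equation*}
\sum_{x\in X_D}\mu_x\le p_\ast v(X_D\mid X)=p_\ast v(\overline{X}_D\mid\overline{X})=\sum_{x\in\overline{X}_D}\mu_x .
\end{equation*}
Since $X$ and $\overline{X}$ agree on $D\setminus\Lambda$, the ground-state inequality follows.

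For the converse, let $Y$ be a periodic ground state with period lattice $\bK$ and fundamental domain $P$. First compare densities. Averaging over boxes, the chemical-potential density of $Y$ is $\rho(Y)=|P|^{-1}\sum_{x\in Y_P}\mu_x\le p_\ast|P|^{-1}v(Y_P\mid Y)$. Since $\sum_{x}\phi_x=1$ and covariance makes $\sum_{x\in Y_P}\phi_x(\cdot\mid Y)$ tile $S$ under $\bK$, we have $v(Y_P\mid Y)=\lambda(S/\bK)$. Hence $\rho(Y)\le p_\ast\lambda(S/\bK)/|P|$, with equality iff $Y$ is perfect. Every perfect $\overline{X}$ attains this maximal density.

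Now suppose $Y$ is not perfect, so the inequality is strict. Build a finite competitor: on a large box, replace $Y$ by a perfect configuration $\overline{X}$ in the interior, keeping $Y$ outside. A boundary layer of empty cells of width the interaction radius ensures admissibility. The bulk gain is of order $n^d(\rho(\overline{X})-\rho(Y))$ and the boundary loss is $O(n^{d-1})$, so for large $n$ the energy strictly decreases. This contradicts the ground-state property of $Y$. The step I would check most carefully is the density identity via partition of unity and covariance, which requires $S/\bK$ to have finite $\lambda$-measure.
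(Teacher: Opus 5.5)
Your proof is correct and follows the paper's route. You get periodicity from translation covariance plus the finiteness of the set of perfect configurations. You get the ground-state inequality by enlarging $\Lambda$ to a cell-measurable region and cancelling the screened outside allocations via partition of unity; the paper packages this step as Lemma~\ref{lem:allocated volume screening} applied to $[\Lambda]$. You get the converse from a bulk-versus-boundary competitor built on the identity $v(\overline{X}_G\mid\overline{X})=v(Y_G\mid Y)=\lambda(G\cap S)$ of Lemma~\ref{lem:common supercell equal volume}.

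Two small points to tidy up. First, finiteness of $v(x\mid\overline{X})$ comes from perfection, since it equals $\mu_x/p_\ast$, not from local optimization. In fact no integrability or decay is needed at all, because a.e.-equal nonnegative functions have equal integrals in $[0,\infty]$. Second, your ``maximal density'' $p_\ast\lambda(S\cap P)/|P|$ is the same for $Y$ and for $\overline{X}$ only because $\lambda$ is $\bL$-invariant. The paper makes this explicit by working in a supercell $G$ common to both.
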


First, we note that Properties~\eqref{eqn:translation covariance} and~\eqref{eqn:lower semi-continuity} of the volume allocation extend straightforwardly to the allocated volume, the latter as a consequence of Fatou's lemma.

\begin{lemma}
    \label{lem:translation invariance and lower semicontinuity}
    The allocated volume is translation invariant and lower semi-continuous, i.e., for all non-empty admissible configurations $X\in\Omega$, $x\in X$, and $t\in\bL$,
    \begin{equation}
        v(t+x\mid t+X)=v(x\mid X),
    \end{equation}
    and for all sequences of admissible configurations $(X_n)_{n\ge 1}$ converging to $X$,
    \begin{equation}
        \liminf_{n\to\infty}v(x\mid X_n)\ge v(x\mid X).
    \end{equation}
\end{lemma}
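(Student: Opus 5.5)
Both claims follow from the corresponding properties of the volume allocation $\set{\phi_x(\cdot\mid X)}$ in Definition~\ref{def:volume allocation}, by integrating against $\lambda$. I will treat them in turn.

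For translation invariance, fix $t\in\bL$, a non-empty admissible configuration $X$, and $x\in X$. First note that $t+X$ is again admissible and contains $t+x$, so $v(t+x\mid t+X)$ is defined. Write
\begin{equation*}
    v(t+x\mid t+X)=\int_S\lambda(\dd{s})\,\phi_{t+x}(s\mid t+X).
\end{equation*}
Since $S$ and $\lambda$ are $\bL$-invariant, the substitution $s=t+s'$ maps $S$ onto itself and preserves $\lambda$. The integral therefore equals $\int_S\lambda(\dd{s'})\,\phi_{t+x}(t+s'\mid t+X)$. By translation covariance~\eqref{eqn:translation covariance}, the integrand agrees $\lambda$-a.e.\ with $\phi_x(s'\mid X)$, so the integral is $v(x\mid X)$. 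The only point needing care is that the a.e.\ statement in~\eqref{eqn:translation covariance} refers to the variable after the shift. This causes no problem, because the shift preserves $\lambda$-null sets.

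For lower semi-continuity, let $X_n\to X$ in the topology of local convergence and fix $x\in X$. Since $X_n$ agrees with $X$ on arbitrarily large balls for $n$ large, $x\in X_n$ for all sufficiently large $n$. Discarding finitely many terms, $v(x\mid X_n)$ is therefore defined, and the $\liminf$ is unaffected. The functions $\phi_x(\cdot\mid X_n)$ are non-negative and measurable, so Fatou's lemma gives
\begin{equation*}
    \liminf_{n\to\infty}v(x\mid X_n)\ge\int_S\lambda(\dd{s})\liminf_{n\to\infty}\phi_x(s\mid X_n).
\end{equation*}
By~\eqref{eqn:lower semi-continuity}, the integrand on the right is at least $\phi_x(s\mid X)$ for $\lambda$-a.e.\ $s$. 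By monotonicity of the integral, the right-hand side is therefore at least $v(x\mid X)$.

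Neither step is a real obstacle. The only subtleties are two bookkeeping points: checking that $x\in X_n$ eventually, so that the quantities are defined, and checking that the almost-everywhere qualifiers survive the change of variables. Both have been handled above. The argument does not need the values of $v$ to be finite, since Fatou's lemma holds for $[0,\infty]$-valued integrals.
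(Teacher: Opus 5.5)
Your proposal is correct and matches the paper's argument: you get translation invariance by integrating the covariance property~\eqref{eqn:translation covariance} against $\lambda$, using the $\bL$-invariance of $S$ and $\lambda$, and lower semi-continuity by applying Fatou's lemma to the pointwise bound~\eqref{eqn:lower semi-continuity}. Your extra bookkeeping (that $x\in X_n$ for all large $n$, and that the shift preserves $\lambda$-null sets) is exactly the routine detail the paper leaves implicit.
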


Next, we show that two non-empty periodic configurations have the same allocated volume within a common supercell.

\begin{lemma}
    \label{lem:common supercell equal volume}
    If $X,Y$ are non-empty periodic configurations and $G$ is a supercell common to both $X$ and $Y$, then
    \begin{equation}
        v(X_G\mid X)=v(Y_G\mid Y)=\lambda(G\cap S).
    \end{equation}
\end{lemma}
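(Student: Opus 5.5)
It suffices to prove $v(X_G\mid X)=\lambda(G\cap S)$ for a single non-empty periodic configuration $X$ and any supercell $G$ of its invariance lattice. Applying this to $X$ and to $Y$ then gives the claim. The idea is a standard ``unfolding'' argument: by periodicity, the volume the particles in one supercell receive from all of $S$ equals the total allocation density integrated over one supercell, and the partition of unity makes that integral equal to $\lambda(G\cap S)$.

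Write $G=A\cdot[0,1)^d$ with $\cL^G=A\cdot\Z^d\subset\bL$, where $X$ is invariant under $\cL^G$, so that $t+X=X$ for all $t\in\cL^G$. The translates $\{t+G\}_{t\in\cL^G}$ partition $\R^d$. Moreover, every $x\in X$ can be written uniquely as $x=t+y$ with $y\in X_G$ and $t\in\cL^G$: here $x=(u,i)$ lies in $t+G$ exactly when $u$ does.

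\textbf{Step 1.} We use Tonelli, since all functions are nonnegative, together with the partition of unity~\eqref{eqn:partition of unity}:
\begin{equation*}
\lambda(G\cap S)=\int_{G\cap S}\lambda(\dd{s})\sum_{x\in X}\phi_x(s\mid X)=\sum_{y\in X_G}\sum_{t\in\cL^G}\int_{G\cap S}\lambda(\dd{s})\,\phi_{t+y}(s\mid X).
\end{equation*}

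\textbf{Step 2.} We rewrite each summand using translation covariance~\eqref{eqn:translation covariance}. Since $X=t+X$, we have $\phi_{t+y}(s\mid X)=\phi_{t+y}(s\mid t+X)=\phi_y(s-t\mid X)$ for $\lambda$-a.e.\ $s$. Now substitute $s'=s-t$. This is legitimate because $S$ and $\lambda$ are $\bL$-invariant and $t\in\cL^G\subset\bL$. The summand becomes $\int_{(G-t)\cap S}\lambda(\dd{s'})\phi_y(s'\mid X)$.

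\textbf{Step 3.} As $t$ ranges over $\cL^G$, the sets $G-t$ partition $\R^d$. Summing over $t$ by countable additivity therefore gives $\int_S\lambda(\dd{s})\phi_y(s\mid X)=v(y\mid X)$. Summing over $y\in X_G$ then yields $\lambda(G\cap S)=v(X_G\mid X)$. The set $X_G$ is finite, because $G$ is bounded and $\bG$ is a finite union of lattices, so this final sum is harmless.

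The only delicate step is Step 2, namely checking that covariance is used in the right direction. With $x\mapsto t+y$ and configuration $t+X=X$, covariance reads $\phi_{t+y}(t+\cdot\mid t+X)=\phi_y(\cdot\mid X)$, which is exactly the identity needed. One must also confirm that the a.e.\ qualifications survive translation and the countable union; this holds because $\lambda$ is $\bL$-invariant and a countable union of null sets is null. Finiteness of $\lambda(G\cap S)$ is never needed, since every identity above is an equality of nonnegative extended reals.
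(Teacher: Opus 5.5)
Your proof is correct and is the same unfolding argument as the paper's. The paper starts from $v(X_G\mid X)$, splits $S$ into the translates $(t+G)\cap S$, moves each piece back to $G$ using translation covariance with $-t+X=X$, reassembles the double sum into $\sum_{x\in X}\phi_x(\cdot\mid X)$, and then applies the partition of unity; this is your chain read in reverse, and your remarks on the direction of covariance, the a.e.\ qualifications, and Tonelli supply details the paper leaves implicit.
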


\begin{proof}
    Since $X$ is $\cL^G$-invariant, we have that
    \begin{equation}
        \begin{multlined}
            v(X_G\mid X)
            =\sum_{x\in X_G}\int_S\lambda(\dd{s})\phi_x(s\mid X)
            =\sum_{x\in X_G}\sum_{t\in\cL^G}\int_{(t+G)\cap S}\lambda(\dd{s})\phi_x(s\mid X)
            \\
            =\sum_{x\in X_G}\sum_{t\in\cL^G}\int_{G\cap S}\lambda(\dd{s})\phi_{-t+x}(s\mid -t+X)
            =\int_{G\cap S}\lambda(\dd{s})\sum_{x\in X}\phi_x(s\mid X)
            =\lambda(G\cap S),
        \end{multlined}
    \end{equation}
    where we used Lemma~\ref{lem:translation invariance and lower semicontinuity} in the third equality and~\eqref{eqn:partition of unity} in the last.
    The same argument applies to $Y$.
\end{proof}

We now establish an important screening property of the allocated volume.

\begin{lemma}
    \label{lem:allocated volume screening}
    For every bounded, $\cF$-measurable region $\Lambda\subset\R^d$, perfect configuration $\overline{X}$ and admissible configuration $X$ coinciding with $\overline{X}$ on $\partial_F\Lambda\cup\Lambda^c$,
    \begin{equation}
        v(X_\Lambda\mid X)=v(\overline{X}_\Lambda\mid\overline{X}).
    \end{equation}
\end{lemma}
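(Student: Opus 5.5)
The plan is to prove a pointwise identity between the allocation densities and then integrate it over $S$. The claim will be that, $\lambda$-a.e. on $S$,
\begin{equation*}
    \sum_{x\in X_\Lambda}\phi_x(\cdot\mid X)=\sum_{x\in\overline{X}_\Lambda}\phi_x(\cdot\mid\overline{X}).
\end{equation*}
Integrating this against $\lambda$ gives $v(X_\Lambda\mid X)=v(\overline{X}_\Lambda\mid\overline{X})$. The sums are finite because $\Lambda$ is bounded, and the integrands are nonnegative, so summation and integration can be interchanged. Note that $X$ is non-empty, since it agrees with $\overline{X}$ on $\Lambda^c$, so $\phi_x(\cdot\mid X)$ is defined. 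The pointwise identity will come from the partition of unity~\eqref{eqn:partition of unity}, applied to both $X$ and $\overline{X}$:
\begin{equation*}
    \sum_{x\in X_\Lambda}\phi_x(\cdot\mid X)=1-\sum_{x\in X_{\Lambda^c}}\phi_x(\cdot\mid X),
    \qquad
    \sum_{x\in \overline{X}_\Lambda}\phi_x(\cdot\mid \overline{X})=1-\sum_{x\in \overline{X}_{\Lambda^c}}\phi_x(\cdot\mid \overline{X}).
\end{equation*}
Since $X_{\Lambda^c}=\overline{X}_{\Lambda^c}$ by hypothesis, it then suffices to show that $\phi_x(\cdot\mid X)=\phi_x(\cdot\mid\overline{X})$ $\lambda$-a.e. for every $x\in X_{\Lambda^c}$. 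There are countably many such $x$, so the union of the exceptional null sets is still null.

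To get this equality I will use the screening property in Assumption~\ref{itm:volume screening}. Because $\Lambda$ is $\cF$-measurable, $\Lambda^c$ is a union of cells $K\in\cT^F$. Fix such a cell $K\subset\Lambda^c$. I claim that $K$ is $\overline{X}$-correct with respect to $X$, that is, $X_{\onion[1]{K}}=\overline{X}_{\onion[1]{K}}$. Every cell $K'\subset\onion[1]{K}$ has closure meeting the closure of $K$. Such a $K'$ either lies in $\Lambda^c$, where $X$ and $\overline{X}$ agree, or lies in $\Lambda$. In the latter case its closure meets the closure of $K\subset\Lambda^c$, so $K'\subset\partial_F\Lambda$ by definition of $\partial_F\Lambda$, and again $X$ and $\overline{X}$ agree there.

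The one step needing care is transferring the screening property, which Assumption~\ref{asm:supercell} states only for the base cell $F$, to an arbitrary cell $K=t+F$ with $t\in\cL^F$. Here I use translation covariance~\eqref{eqn:translation covariance}. Since $F$ is a supercell of $\overline{X}$, we have $\cL^F\subset\bL$ and $-t+\overline{X}=\overline{X}$. Therefore $F$ is $\overline{X}$-correct with respect to $-t+X$, and for $x\in X_K$ the assumption gives
\begin{equation*}
    \phi_x(t+\cdot\mid X)=\phi_{-t+x}(\cdot\mid -t+X)=\phi_{-t+x}(\cdot\mid\overline{X})=\phi_{-t+x}(\cdot\mid -t+\overline{X})=\phi_x(t+\cdot\mid\overline{X}).
\end{equation*}
This holds $\lambda$-a.e.; $\lambda$ is $\bL$-invariant, so null sets are preserved under the shift. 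Ranging over all cells $K\subset\Lambda^c$ covers every $x\in X_{\Lambda^c}$, which completes the argument.
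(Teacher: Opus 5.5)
Your proof is correct and follows the paper's argument: split both sides with the partition of unity, cancel the $\Lambda^c$ contributions using the screening assumption, and integrate over $S$. You also make explicit two steps the paper leaves implicit: every cell of $\Lambda^c$ is $\overline{X}$-correct because $X$ agrees with $\overline{X}$ on $\partial_F\Lambda$, and the screening property transfers from $F$ to $t+F$ via translation covariance and the $\cL^F$-invariance of $\overline{X}$.
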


\begin{proof}
    By~\eqref{eqn:partition of unity},
    \begin{equation}
        \label{eqn:lem:allocated volume screening - splitting}
        \sum_{x\in X_\Lambda}\phi_x(\cdot\mid X)
        +\sum_{x\in X_{\Lambda^c}}\phi_x(\cdot\mid X)
        =1
        =\sum_{x\in\overline{X}_\Lambda}\phi_x(\cdot\mid\overline{X})
        +\sum_{x\in\overline{X}_{\Lambda^c}}\phi_x(\cdot\mid\overline{X})
        \quad\lambda\text{-a.e.}
    \end{equation}
    By Assumption~\ref{itm:volume screening}, the second sums on both sides of~\eqref{eqn:lem:allocated volume screening - splitting} are $\lambda$-a.e. equal.
    After cancellation, integrating both sides over $S$ yields the lemma.
\end{proof}

\begin{proof}[Proof of Proposition~\ref{prop:perfect iff periodic ground state}]
    By Lemma~\ref{lem:translation invariance and lower semicontinuity}, every translation of a perfect configuration is again a perfect configuration.
    Since there are only finitely many perfect configurations, each perfect configuration must be periodic.
    To see that a perfect configuration $\overline{X}$ is a ground state, let $\Lambda\subset\bG$ be bounded, and $X$ be an admissible configuration coinciding with $\overline{X}$ on $\Lambda^c$.
    Let $[\Lambda]$ be the smallest $\cF$-measurable set such that $\Lambda$ is contained in $[\Lambda]\setminus\partial_F[\Lambda]$.
    Then,
    \begin{equation}
        H(X_\Lambda\mid \overline{X}_{\Lambda^c})-H(\overline{X}_\Lambda\mid \overline{X}_{\Lambda^c})
        =-\sum_{x\in X_{[\Lambda]}}\mu_x+\sum_{x\in\overline{X}_{[\Lambda]}}\mu_x
        \ge -p_\ast\sum_{x\in X_{[\Lambda]}}v(x\mid X)+p_\ast\sum_{x\in\overline{X}_{[\Lambda]}}v(x\mid\overline{X})
        =0.
    \end{equation}
    where we used Assumption~\ref{asm:optimization} in the inequality and Lemma~\ref{lem:allocated volume screening} in the last equality.

    For the converse, let $Y$ be a periodic ground state. 
    It is clear that $Y$ must be non-empty.
    Suppose by contradiction that $Y$ is not a perfect configuration, namely, there exist $y\in Y$ and $\delta>0$ such that $p_\ast\, v(y\mid Y)=\mu_y+\delta$.
    Let $\overline{X}$ be a perfect configuration, and $G$ be a supercell larger than the interaction radius and common to both $Y$ and $\overline{X}$.
    By Assumption~\ref{asm:optimization},
    \begin{equation}
        H(\overline{X}_{\onion[n]{G}}\mid Y_{\R^d\setminus\onion[n+1]{G}})
        =-(2n+1)^d\sum_{x\in \overline{X}_G}\mu_x
        =-(2n+1)^d\,p_\ast\sum_{x\in\overline{X}_G}v(x\mid\overline{X}),
    \end{equation}
    while
    \begin{equation}
        H(Y_{\onion[n+1]{G}}\mid Y_{\R^d\setminus\onion[n+1]{G}})
        =-(2n+3)^d\sum_{y\in Y_G}\mu_y
        \ge-(2n+3)^d\,\left[p_\ast\sum_{y\in Y_G}v(y\mid Y)-\delta\right].
    \end{equation}
    By Lemma~\ref{lem:common supercell equal volume},
    \begin{equation}
        \begin{multlined}
            \limsup_{n\to\infty}\frac{1}{(2n+3)^d}\left[H(\overline{X}_{\onion[n]{G}}\mid Y_{\R^d\setminus\onion[n+1]{G}})-H(Y_{\onion[n+1]{G}}\mid Y_{\R^d\setminus\onion[n+1]{G}})\right]
            \\
            \le\limsup_{n\to\infty}\left[-\frac{(2n+1)^d}{(2n+3)^d}\,p_\ast\, v(\overline{X}_G\mid\overline{X})+p_\ast\, v(Y_G\mid Y)\right]-\delta
            =-\delta
            <0,
        \end{multlined}
    \end{equation}
    contradicting the assumption that $Y$ is a ground state.
\end{proof}

\subsubsection{The Peierls condition}

In this sub-subsection, we verify the Peierls condition under Assumptions~\ref{asm:optimization} and~\ref{asm:supercell}.
We assume familiarity with the standard contour formalism as presented in~\cite{zahradnik1984alternate}, including the notation $\supp{\Gamma}$ and $\intr{\Gamma}$ for the support and interior of a contour $\Gamma$, and $V(\Gamma):=\supp{\Gamma}\cup\intr{\Gamma}$.
Since all of these sets are $\cF$-measurable, by the equivalence $\cL^F\cong\Z^d$, we will use $\abs{\cdot}$ to denote the number of cells in a given $\cF$-measurable set.

\begin{proposition}
    \label{prop:Peierls condition}
    The \textbf{Peierls condition} holds: There exists a constant $\tau>0$ such that for every contour $\Gamma=(\supp{\Gamma},X_{\supp{\Gamma}})$,
    \begin{equation}
        \label{eqn:Peierls condition}
        H(X_{\supp{\Gamma}}\mid X_{\supp{\Gamma}^c})-H(\overline{X}_{\supp{\Gamma}}\mid \overline{X}_{\supp{\Gamma}^c})
        \ge \tau\abs{\supp{\Gamma}}.
    \end{equation}
\end{proposition}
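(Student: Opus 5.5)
We reduce the energy difference to a volume comparison, as in the proof of Proposition~\ref{prop:perfect iff periodic ground state}, and then obtain a gain of order one from each incorrect cell. First note how a contour sits. By definition of contours in Zahradn\'ik's formalism, the configuration $X$ on $\supp{\Gamma}^c$ near $\supp{\Gamma}$ agrees with periodic ground states, which are perfect configurations by Proposition~\ref{prop:perfect iff periodic ground state}. Here $\overline{X}$ is the ground state matching the exterior boundary of $\Gamma$. By \eqref{eqn:interaction screening}, only tiles in $\supp{\Gamma}$ and in adjacent correct cells contribute, so the left side of \eqref{eqn:Peierls condition} equals $-\sum_{x\in X_{\supp{\Gamma}}}\mu_x+\sum_{x\in\overline{X}_{\supp{\Gamma}}}\mu_x$, up to matching boundary contributions.

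The main step is to write both sums through allocated volumes. For the perfect configuration, $\sum_{x\in\overline{X}_{\supp{\Gamma}}}\mu_x=p_\ast v(\overline{X}_{\supp{\Gamma}}\mid\overline{X})$. For $X$, local optimization gives $\mu_x\le p_\ast v(x\mid X)$. Screening (Assumption~\ref{itm:volume screening}) applies in the correct cells surrounding $\supp{\Gamma}$, including those bordering the interior components, which may carry different ground states $\overline{X}^{q}$. We then argue as in Lemma~\ref{lem:allocated volume screening}: integrate the partition of unity over $S$ restricted to cells. This gives $v(X_{\supp{\Gamma}}\mid X)=v(\overline{X}_{\supp{\Gamma}}\mid\overline{X})=\lambda(\supp{\Gamma}\cap S)$, since allocations of correct tiles coincide with those of the respective perfect configurations, and by Lemma~\ref{lem:common supercell equal volume} every perfect configuration allocates volume $\lambda(F\cap S)$ per cell. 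Hence
\begin{equation*}
H(X_{\supp{\Gamma}}\mid X_{\supp{\Gamma}^c})-H(\overline{X}_{\supp{\Gamma}}\mid \overline{X}_{\supp{\Gamma}^c})\ge\sum_{x\in X_{\supp{\Gamma}}}\bigl(p_\ast v(x\mid X)-\mu_x\bigr)\ge 0 .
\end{equation*}

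It remains to show this excess is at least $\tau$ per incorrect cell. Every cell $K$ of $\supp{\Gamma}$ is incorrect, so $X_{\onion[1]{K}}$ differs from every perfect configuration. We define the local excess $e(K\mid X):=\sum_{x\in X_{\onion[1]{K}}}(p_\ast v(x\mid X)-\mu_x)\ge 0$. Each tile lies in boundedly many $\onion[1]{K}$, at most $3^d$, so the total excess is at least $3^{-d}\sum_{K}e(K\mid X)$. The key claim is $\inf e(K\mid X)>0$, the infimum taken over admissible $X$ for which $K$ is incorrect. To prove it, take a minimizing sequence and use translation covariance to center $K=F$. Compactness of $\Omega$ in the local topology gives a limit $X$. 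Lower semicontinuity (Lemma~\ref{lem:translation invariance and lower semicontinuity}) and the finiteness of the sum give $e(F\mid X)\le\liminf=0$. The configuration within $\onion[1]{F}$ stabilizes, so $F$ remains incorrect in the limit. One also needs that a non-empty local configuration persists, which can be handled by treating empty $\onion[1]{F}$ separately via partition of unity.

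The hard part is then to upgrade local zero excess to a global contradiction, so that the limit must be perfect. My first attempt would be to show that zero excess propagates: once $p_\ast v(x\mid X)=\mu_x$ on a cell region, screening would force agreement with a perfect configuration. A local obstruction need not be excluded this way, however. The fallback is to strengthen the compactness argument over growing neighborhoods $\onion[n]{F}$ combined with global optimization (Assumption~\ref{itm:global optimization}). Sequences of incorrect configurations with excess tending to zero on ever larger regions would converge to a configuration with $p_\ast v(x\mid X)=\mu_x$ everywhere, hence to a perfect configuration. Since the allocations near $F$ are lower semicontinuous and the configurations in $\onion[1]{F}$ are eventually constant, $F$ would then be correct in the limit, a contradiction. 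The uniformity needed to pass from excess over large windows to excess per cell is obtained by enlarging $F$, which the assumption on the choice of $F$ permits. This step is the main obstacle.
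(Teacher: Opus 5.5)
Your reduction of the energy difference to the excess $\sum_{x\in X_{\supp{\Gamma}}}(p_\ast v(x\mid X)-\mu_x)$ is the paper's Lemma~\ref{lem:equivalence to lattice volume}. Note that the integration must be over all of $S$, after cancelling the exterior and interior sums by screening, not cell by cell, since allocations are not localized. Your fallback compactness argument over the windows $\onion[n]{F}$ is exactly Lemma~\ref{lem:defect nearby}.

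The gap is that the proposal stops at the step you call the main obstacle, and that step carries the proof. You are right that positivity of $e(K\mid X)$, summed only over $\onion[1]{K}$, is not available. The limit of your minimizing sequence has zero excess only on $\onion[1]{F}$, and neither optimization assumption says anything about such configurations. The missing observation is that nothing needs upgrading. The compactness argument already yields \emph{fixed} constants $\delta>0$ and $n\ge 1$ such that every incorrect $K$ with $X_{\onion[1]{K}}\neq\emptyset$ has some $x\in X_{\onion[n]{K}}$ with $p_\ast v(x\mid X)\ge\mu_x+\delta$. This $x$ lies in $\supp{\Gamma}$, because tiles in correct cells have perfect allocations by Assumption~\ref{itm:volume screening} and hence zero excess. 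Your own overlap count, with $3^d$ replaced by $(2n+1)^d$, then gives $\tau=\delta/(2n+1)^d$. Enlarging $F$ is unnecessary, and as you state it, it is not carried out. It changes the contours, and you would still have to relate incorrect large cells to incorrect small ones and count overlaps.

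The empty case is also not handled. Partition of unity forces nothing locally, because Definition~\ref{def:volume allocation} has no locality requirement. The volume of an empty $\onion[1]{K}$ can be absorbed by tiles arbitrarily far away, so no excess near $K$ follows.

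The paper instead uses $\mu_i>0$ together with \eqref{eqn:interaction screening}. If $X_{\supp{\Gamma}}\cap\onion[1]{K}=\emptyset$, a tile can be inserted in $K$ without creating overlaps. This only lowers the left-hand side of \eqref{eqn:Peierls condition}, so one may assume no such cells exist.

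Alternatively, let $E$ be these empty cells and $N$ the remaining cells of $\supp{\Gamma}$. Tiles of $X_{\supp{\Gamma}}$ sit only in cells of $N$, at most $\abs{\bG\cap F}$ per cell. The global volume identity then bounds the difference below by $p_\ast\lambda(F\cap S)\abs{E}-C\abs{N}$, with $C=\abs{\bG\cap F}\max_i\mu_i$. The defect bound gives a lower bound $c\abs{N}$. A convex combination of the two bounds is proportional to $\abs{\supp{\Gamma}}$.
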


The main ingredient in the proof of Proposition~\ref{prop:Peierls condition} is the following lemma, which guarantees a minimum increase in the allocated volume within a uniformly bounded neighborhood of every incorrect cell.
Similar ideas first appeared in~\cite{holsztynski1978peierls}.

\begin{lemma}
    \label{lem:defect nearby}
    There exist constants $\delta>0$ and $n\ge 1$ such that for every cell $K\in\cT^F$ and admissible configuration $X\in\Omega$ with respect to which $K$ is incorrect, either $X_{\onion[1]{K}}=\emptyset$, or there exists $x\in X_{\onion[n]{K}}$ such that $p_\ast\, v(x\mid X)\ge\mu_x+\delta$.
\end{lemma}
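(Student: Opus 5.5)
We argue by contradiction, using compactness of $\Omega$ in the topology of local convergence together with the lower semi-continuity of the allocated volume from Lemma~\ref{lem:translation invariance and lower semicontinuity}. By translation covariance we may assume $K=F$: translating by $t\in\cL^F$ preserves correctness, the sets $\onion[n]{K}$ and the allocated volumes. Suppose the statement fails. Then for every $n\ge1$ (taking $\delta=1/n$) there is an admissible configuration $X_n$ such that:
\begin{itemize}
    \item $F$ is incorrect with respect to $X_n$;
    \item $(X_n)_{\onion[1]{F}}\ne\emptyset$;
    \item $p_\ast\, v(x\mid X_n)<\mu_x+1/n$ for every $x\in (X_n)_{\onion[n]{F}}$.
\end{itemize}
Since $\bG$ is countable and each site takes finitely many values, $\Omega$ is compact; admissibility is a closed condition because it is determined locally. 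Hence a subsequence converges to some admissible $X$.

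Next we transfer the three properties to the limit.
\begin{itemize}
    \item \emph{Non-emptiness.} $X_{\onion[1]{F}}\ne\emptyset$, because $\onion[1]{F}\cap\bG$ is finite, so eventually the $X_n$ agree with $X$ there and one can pass to a further subsequence. In particular $X$ is non-empty.
    \item \emph{Incorrectness.} Correctness of $F$ depends only on the finite window $\onion[1]{F}$, so $F$ remains incorrect with respect to $X$ for every periodic ground state. By Proposition~\ref{prop:perfect iff periodic ground state}, the periodic ground states are exactly the perfect configurations.
    \item \emph{Perfection.} Fix $x\in X$. For large $n$ we have $x\in X_n$ and $x\in\onion[n]{F}$, since $x$ lies in a bounded window where the configurations eventually agree. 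Lower semi-continuity then gives
    \begin{equation}
        p_\ast\, v(x\mid X)\le\liminf_{n\to\infty} p_\ast\, v(x\mid X_n)\le\mu_x .
    \end{equation}
    Assumption~\ref{itm:minorization} gives the reverse inequality, so equality holds for every $x\in X$.
\end{itemize}
Thus $X$ is a perfect configuration, with $X=\overline{X}$ for some $\overline{X}$. Then trivially $X_{\onion[1]{F}}=\overline{X}_{\onion[1]{F}}$, i.e.\ $F$ is $\overline{X}$-correct, which contradicts incorrectness.

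It remains to extract uniform constants $\delta$ and $n$ from this argument. Taking the sequence with both $\delta=1/n$ and window $\onion[n]{F}$ shrinking and growing simultaneously means the negation yields exactly the sequence above. So the contradiction shows that for some $n$ the pair $(\delta,n)=(1/n,n)$ works uniformly over all cells $K$ and all configurations $X$, the uniformity in $K$ following from translation covariance.

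The main obstacle is the limiting step. We must make sure that lower semi-continuity is applied to a particle $x$ actually present in $X_n$ for all large $n$, which holds by local convergence. We must also make sure the limit is non-empty, so that Assumption~\ref{asm:optimization} and the definition of perfection apply; this is exactly why the alternative $X_{\onion[1]{K}}=\emptyset$ is excluded in the statement. A further subtle point is that the limit satisfies perfection for \emph{all} its particles only because the windows $\onion[n]{F}$ exhaust $\R^d$. This forces the radius $n$ to grow along the sequence, and is the reason the lemma produces a defect only within a bounded, but not explicit, neighborhood.
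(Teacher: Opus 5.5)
Your proposal is correct and follows the paper's proof essentially step for step. You reduce to $K=F$, extract a convergent subsequence with $\delta=1/n$ and growing windows, use lower semi-continuity together with local optimization to show the limit is perfect, and exploit the finite-window nature of correctness to reach a contradiction. The only cosmetic difference is that you pass incorrectness to the limit $X$, whereas the paper passes correctness of $F$ back to $X_n$ for large $n$; this is the same observation.
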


\begin{proof}
    Without loss of generality, we may assume that $K=F$.
    By contradiction, there exists a sequence of admissible configurations $(X_n)_{n\ge 1}$ such that for every $n\ge 1$, (a) $F$ is incorrect with respect to $X_n$, (b) $X_n\cap \onion[1]{F}\ne\emptyset$, and (c) for every $x\in X_n\cap \onion[n]{F}$, $p_\ast\, v(x\mid X_n)<\mu_x+\frac{1}{n}$.
    By passing to a subsequence if necessary, we may further assume that there exists an admissible configuration $X\in\Omega$ such that $X_n$ converges to $X$.
    Since $X_n\cap \onion[1]{F}\ne\emptyset$ for every $n\ge 1$, it holds that $X\cap \onion[1]{F}\ne\emptyset$, so, in particular, $X$ is non-empty.
    Let $x\in X$.
    Then, for every sufficiently large $n$, $x\in X_n\cap \onion[n]{F}$ and thus by Lemma~\ref{lem:translation invariance and lower semicontinuity},
    \begin{equation}
        \mu_x
        \le p_\ast\, v(x\mid X)
        \le \liminf_{n\to\infty}p_\ast\, v(x\mid X_n)
        \le \liminf_{n\to\infty}\left(\mu_x+\frac{1}{n}\right)
        =\mu_x.
    \end{equation}
    Hence, $X$ is a perfect configuration.
    However, for every sufficiently large $n$, $X_n$ coincides with $X$ on $\onion[1]{F}$, and so $F$ is correct with respect to $X_n$, which is a contradiction.
\end{proof}

We also need the following identity for the total allocated volume in the support of a contour.

\begin{lemma}
    \label{lem:equivalence to lattice volume} 
    For every contour $\Gamma=(\supp{\Gamma},X_{\supp{\Gamma}})$, 
    \begin{equation}
        v(X_{\supp{\Gamma}}\mid X)=\lambda(F\cap S)\abs{\supp{\Gamma}}.
    \end{equation}
\end{lemma}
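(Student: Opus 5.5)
The plan is to compare $X$, cell by cell outside $\supp{\Gamma}$, with the perfect configurations that label the components of $\supp{\Gamma}^c$. The partition of unity~\eqref{eqn:partition of unity} then turns the allocated volume inside $\supp{\Gamma}$ into a difference of allocated volumes of perfect configurations over bounded $\cF$-measurable regions. Those are computed by Lemma~\ref{lem:common supercell equal volume}.

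\textbf{Setup.} Let $X$ be the configuration associated with $\Gamma$ in the standard way: it agrees with $X_{\supp{\Gamma}}$ on $\supp{\Gamma}$, and it agrees with the perfect configuration $\overline{X}^{q_A}$ labelling each connected component $A$ of $\supp{\Gamma}^c$. Let $E$ denote the unique unbounded component, so that $V(\Gamma)=E^c$, and let the remaining components $A$ be the bounded ones making up $\intr{\Gamma}$. By construction of contours, every cell $K\subset A$ is $q_A$-correct with respect to $X$.

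\textbf{Step 1 (local screening).} For every $x\in X_A$, with $x$ in the cell $K\subset A$, we have $\phi_x(\cdot\mid X)=\phi_x(\cdot\mid\overline{X}^{q_A})$ $\lambda$-a.e. To see this, write $K=t+F$ with $t\in\cL^F$. Perfect configurations are $\cL^F$-invariant, since $F$ is a common supercell, so $\overline{X}^{q_A}=t+\overline{X}^{q_A}$. Then $F$ is $\overline{X}^{q_A}$-correct with respect to $-t+X$. Applying Assumption~\ref{itm:volume screening} and translation covariance~\eqref{eqn:translation covariance} gives the claim.

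\textbf{Step 2 (partition of unity).} By~\eqref{eqn:partition of unity} and Step 1, $\lambda$-a.e.
\begin{equation*}
\sum_{x\in X_{\supp{\Gamma}}}\phi_x(\cdot\mid X)=1-\sum_{x\in\overline{X}^{q_E}_{E}}\phi_x(\cdot\mid\overline{X}^{q_E})-\sum_{A\ne E}\sum_{x\in\overline{X}^{q_A}_{A}}\phi_x(\cdot\mid\overline{X}^{q_A}).
\end{equation*}
Using~\eqref{eqn:partition of unity} again for $\overline{X}^{q_E}$, the term $1-\sum_{x\in\overline{X}^{q_E}_E}\phi_x(\cdot\mid\overline{X}^{q_E})$ equals $\sum_{x\in\overline{X}^{q_E}_{V(\Gamma)}}\phi_x(\cdot\mid\overline{X}^{q_E})$. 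Every remaining sum now runs over a bounded region, so all integrals are finite. This rewriting is the main point to handle carefully: naively integrating the $E$ term would produce infinite quantities. Integrating over $S$ gives
\begin{equation*}
v(X_{\supp{\Gamma}}\mid X)=v(\overline{X}^{q_E}_{V(\Gamma)}\mid\overline{X}^{q_E})-\sum_{A\ne E}v(\overline{X}^{q_A}_A\mid\overline{X}^{q_A}).
\end{equation*}

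\textbf{Step 3 (evaluation).} For any perfect $\overline{X}$ and any cell $K=t+F$, Lemma~\ref{lem:translation invariance and lower semicontinuity} gives $v(\overline{X}_K\mid\overline{X})=v(\overline{X}_F\mid\overline{X})$. Lemma~\ref{lem:common supercell equal volume}, applied with $X=Y=\overline{X}$ and $G=F$, gives $v(\overline{X}_F\mid\overline{X})=\lambda(F\cap S)$. Summing over cells, each bounded $\cF$-measurable $\Lambda$ satisfies $v(\overline{X}_\Lambda\mid\overline{X})=\lambda(F\cap S)\abs{\Lambda}$. Therefore the right-hand side in Step 2 equals
\begin{equation*}
\lambda(F\cap S)\Bigl(\abs{V(\Gamma)}-\sum_{A\ne E}\abs{A}\Bigr)=\lambda(F\cap S)\abs{\supp{\Gamma}},
\end{equation*}
which is the claimed identity.
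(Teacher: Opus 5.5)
Your proof is correct and is essentially the paper's argument. The paper writes $v(X_{\supp{\Gamma}}\mid X)=v(X_{V(\Gamma)}\mid X)-v(X_{\intr{\Gamma}}\mid X)$, replaces the first term by $v(\overline{X}_{V(\Gamma)}\mid\overline{X})$ via Lemma~\ref{lem:allocated volume screening} (whose proof is exactly your partition-of-unity rewriting of the exterior term in Step~2), and evaluates both terms by Lemma~\ref{lem:common supercell equal volume}. Your Step~1 makes explicit two points the paper leaves implicit: the transfer of Assumption~\ref{itm:volume screening} from $F$ to an arbitrary cell by translation covariance, and the screening of each interior component with its own label $q_A$.
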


\begin{proof}
    Let $\overline{X}$ be the perfect configuration such that $X_{\partial_\cF V(\Gamma)}=\overline{X}_{\partial_\cF V(\Gamma)}$.
    We compute
    \begin{equation}
        \begin{multlined}
            v(X_{\supp{\Gamma}}\mid X)
            =v(X_{V(\Gamma)}\mid X)-v(X_{\intr{\Gamma}}\mid X)
            \\
            =v(\overline{X}_{V(\Gamma)}\mid\overline{X})-v(X_{\intr{\Gamma}}\mid X)
            =\lambda(F\cap S)\abs{\supp{\Gamma}},
        \end{multlined}
    \end{equation}
    where we used Lemmas~\ref{lem:allocated volume screening} and~\ref{lem:common supercell equal volume} in the second and third equalities, respectively.
\end{proof}

\begin{proof}[Proof of Proposition~\ref{prop:Peierls condition}]
    Recall that $\mu_1,\dots,\mu_k\ge 0$.
    If there exists a cell $K\in\cT^F$ contained in $\supp{\Gamma}$ such that $X_{\supp{\Gamma}}\cap \onion[1]{K}=\emptyset$, then by~\eqref{eqn:interaction screening}, we can add a particle to $X_{\supp{\Gamma}}$ in $K$ to obtain a new admissible configuration with equal or less energy.
    Repeating the argument as needed, we may assume, for the purpose of proving~\eqref{eqn:Peierls condition}, that $\supp{\Gamma}$ does not contain any such $K$.
    Thus, by Lemma~\ref{lem:defect nearby}, there exist universal constants $\delta>0$ and $n\ge 1$ such that for every cell $K\in\cT^F$ contained in $\supp{\Gamma}$, there exists $x\in X_{\onion[n]{K}}$ with $p_\ast\, v(x\mid X)\ge\mu_x+\delta$; in fact, $x\in X_{\supp{\Gamma}}\cap \onion[n]{K}$ since every cell $K\in\cT^F$ contained in $\supp{\Gamma}^c$ is correct with respect to $X$.

    Now, by Assumption~\ref{asm:optimization},
    \begin{equation}
        \label{eqn:lower bound on contour energy}
        H(X_{\supp{\Gamma}}\mid X_{\supp{\Gamma}^c})
        =-\sum_{x\in X_{\supp{\Gamma}}}\mu_x
        \ge -p_\ast\sum_{x\in X_{\supp{\Gamma}}}v(x\mid X)+\frac{\delta}{(2n+1)^d}\abs{\supp{\Gamma}},
    \end{equation}
    and
    \begin{equation}
        \label{eqn:identity for ground state contour energy}
        H(\overline{X}_{\supp{\Gamma}}\mid \overline{X}_{\supp{\Gamma}^c})
        =-\sum_{x\in \overline{X}_{\supp{\Gamma}}}\mu_x
        =-p_\ast\sum_{x\in \overline{X}_{\supp{\Gamma}}}v(x\mid \overline{X}).
    \end{equation}
    Combining~\eqref{eqn:lower bound on contour energy},~\eqref{eqn:identity for ground state contour energy}, and Lemma~\ref{lem:equivalence to lattice volume}, we have that
    \begin{equation}
        H(X_{\supp{\Gamma}}\mid X_{\supp{\Gamma}^c})-H(\overline{X}_{\supp{\Gamma}}\mid \overline{X}_{\supp{\Gamma}^c})
        \ge\frac{\delta}{(2n+1)^d}\abs{\supp{\Gamma}},
    \end{equation}
    so~\eqref{eqn:Peierls condition} holds with $\tau=\delta/(2n+1)^d$.
\end{proof}

\subsubsection{Deduction of Theorem~\ref{thm:main}}

Among the Richness, Model, and Peierls Assumptions of~\cite{mazel2024pirogov}, only Item~(iv) of the Model Assumption and Item~(iv) of the Peierls Assumption require verification.
The former requires the existence of only finitely many periodic ground states and follows from Assumption~\ref{itm:global optimization} and Proposition~\ref{prop:perfect iff periodic ground state}.
The latter is the Peierls condition for contours, which is already proven in Proposition~\ref{prop:Peierls condition}.

\section{Applications}
\label{sec:applications}

\subsection{Construction of volume allocations}

\subsubsection{Discrete Voronoi tessellation}

In this sub-subsection, we show that the discrete Voronoi tessellation considered in~\cite{he2024high} naturally gives rise to a volume allocation.

Let $\lambda$ denote the counting measure on $\bG$ and $\mathrm{d}$ be an $\bL$-invariant metric on $\bG$ such that every ball has finite $\lambda$-measure.
Let $X\in\Omega$ be a non-empty admissible configuration and $x\in X$.
Recall the notation $T_x$ introduced in Section~\ref{sec:the model}.
Write
\begin{equation}
    \label{eqn:discretized tile}
    D_x:=T_x\cap\bG
\end{equation}
(which we assume always to be non-empty), and define the \textbf{discrete Voronoi cell} of the discretized tile $D_x$ with respect to $X$ as the set
\begin{equation}
    V_X(D_x):=\set{v\in\bG\mid \mathrm{d}(v,D_x)\le \mathrm{d}(v,D_y)\text{ for every }y\in X}.
\end{equation}
For $v\in\bG$, define
\begin{equation}
    \label{eqn:volume allocation via discrete Voronoi tessellation}
    \phi_x(v\mid X):=\frac{\indicator{V_X(D_x)}(v)}{\sum_{z\in X}\indicator{V_X(D_z)}(v)}.
\end{equation}

\begin{lemma}
    \label{lem:discrete Voronoi volume allocation}
    The family of functions~\eqref{eqn:volume allocation via discrete Voronoi tessellation} defines a volume allocation.
\end{lemma}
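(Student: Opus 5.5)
We have to check the three properties of Definition~\ref{def:volume allocation} for the family~\eqref{eqn:volume allocation via discrete Voronoi tessellation}, with $S=\bG$ and $\lambda$ the counting measure. First, $\phi_x(\cdot\mid X)$ is well defined and takes values in $[0,1]$. Fix $v\in\bG$.

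\begin{itemize}
\item \emph{Nonzero denominator.} The values $\mathrm{d}(v,D_y)$, $y\in X$, are attained infima. The closed ball of radius $\mathrm{d}(v,D_{y_0})$ about $v$ (for any fixed $y_0\in X$) is finite. The tiles are drawn from finitely many bounded shapes $T_1,\dots,T_k$, and the discretized tiles $D_y$, $y\in X$, are pairwise distinct because distinct particles have distinct positions in $\bG$. Hence only finitely many $y$ satisfy $\mathrm{d}(v,D_y)\le \mathrm{d}(v,D_{y_0})$. So the minimum over $y\in X$ is attained, and $v$ lies in at least one and at most finitely many Voronoi cells.
\item \emph{Measurability.} This is trivial for the counting measure.
\item \emph{Partition of unity~\eqref{eqn:partition of unity}.} Summing the numerators over $x\in X$ gives exactly the denominator, so the sum is $1$.
\end{itemize}

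Translation covariance~\eqref{eqn:translation covariance} follows from the definitions. For $t\in\bL$ we have $D_{t+x}=t+D_x$, and by $\bL$-invariance of $\mathrm{d}$,
\begin{equation*}
    \mathrm{d}(t+v,D_{t+y})=\mathrm{d}(v,D_y),
\end{equation*}
so $V_{t+X}(D_{t+x})=t+V_X(D_x)$. Both numerator and denominator therefore transform consistently, and the identity holds pointwise, not just $\lambda$-a.e.

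The main step is lower semi-continuity~\eqref{eqn:lower semi-continuity}, to be checked pointwise at each $v\in\bG$ with $x\in X$ fixed; for large $n$ this also means $x\in X_n$. Let $r:=\min_{y\in X}\mathrm{d}(v,D_y)$. The key localization is that whether $v\in V_{X}(D_z)$, and the full set of $z$ attaining the minimum, depend only on particles $z$ whose discretized tile meets the finite ball $B(v,r')$ for some $r'\ge r$. Particles whose tiles lie far away cannot come closer than $r$. Since the tiles are uniformly bounded, such particles have positions in a bounded region.

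Now take $n$ large enough that $X_n$ agrees with $X$ on a large enough ball, so that $x\in X_n$ and the particles relevant for the minimization at $v$ coincide. We argue in two cases.
\begin{itemize}
\item \emph{Case $v\notin V_X(D_x)$.} Then $\phi_x(v\mid X)=0$, and there is nothing to prove.
\item \emph{Case $v\in V_X(D_x)$.} Here $\mathrm{d}(v,D_x)=r$. Any $y\in X_n$ far away has $\mathrm{d}(v,D_y)>r$, because the ball of radius $r$ about $v$ is finite and meets only tiles of particles in a bounded window. Such tiles are determined by the agreement region. Thus the minimal distance at $v$ in $X_n$ is again $r$, attained by the same set of particles. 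Hence $\phi_x(v\mid X_n)=\phi_x(v\mid X)$ eventually, which is even continuity.
\end{itemize}

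The obstacle I expect is making this localization rigorous. One must choose the window radius as $r$ plus the maximal diameter of the tiles, in the metric $\mathrm{d}$ or comparably. This requires relating $\mathrm{d}$-balls to regions of $\R^d$ so that "$X_n$ agrees with $X$ on a large ball" covers all particles whose tiles can meet $B_{\mathrm{d}}(v,r)$. I would handle this by noting that $B_{\mathrm{d}}(v,r)$ is a finite set, and that only finitely many positions $u\in\bG$ have $(u+T_i)\cap B_{\mathrm{d}}(v,r)\neq\emptyset$, by boundedness of $T_i$ and local finiteness of $\bG$. It then suffices that $X_n$ agrees with $X$ on these finitely many sites, which local convergence guarantees.
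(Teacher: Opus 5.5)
Your proposal is correct and is essentially the paper's argument: the paper likewise observes that $\phi_x(v\mid\cdot)$ depends only on the configuration on the fixed finite set of sites $u$ with $\mathrm{d}(v,(u+T_i)\cap\bG)\le\mathrm{d}(v,D_x)$ for some $i$, so local convergence gives $\phi_x(v\mid X_n)=\phi_x(v\mid X)$ for large $n$ (using $\mathrm{d}(v,D_x)$ rather than $r$ handles both of your cases at once), and it treats the remaining properties as immediate. One small aside on your well-definedness check: pairwise distinctness of the $D_y$ is unnecessary and can fail for exotic tiles. Finiteness follows simply because only finitely many sites can carry a tile meeting the finite ball, and each site carries at most one particle.
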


\begin{proof}
    We will only verify the lower semi-continuity property~\eqref{eqn:lower semi-continuity}, as the other properties are immediate.
    Let $v\in\bG$, $X$ be a non-empty admissible configuration, $x\in X$, and $(X_n)_{n\ge 1}$ be a sequence of admissible configurations converging to $X$.
    Without loss of generality, we may assume that $x\in X_n$ for every $n\ge 1$.
    Observe that $\phi_x(v\mid X_n)$ is completely determined by the restriction of $X_n$ to the (fixed) finite set
    \begin{equation}
        \set{u\in\bG\mid\text{there exists }1\le i\le k\text{ such that }\mathrm{d}(v,(u+T_i)\cap\bG)\le\mathrm{d}(v,D_x)}.
    \end{equation}
    By the definition of local convergence, for every sufficiently large $n$, the restriction of $X_n$ to this set coincides with that of $X$, so that $\phi_x(v\mid X_n)=\phi_x(v\mid X)$.
\end{proof}

\subsubsection{Voronoi tessellation}

For tile shapes that do not have well-behaved discretizations, it may be more convenient to consider the usual Voronoi tessellation in $\R^d$.

Let $\lambda$ denote the Lebesgue measure on $\R^d$ and $\mathrm{d}$ be the Euclidean metric. 
Given a non-empty admissible configuration $X\in\Omega$ and $x\in X$, define the \textbf{Voronoi cell} of the tile $T_x$ with respect to $X$ as the set
\begin{equation}
    \label{eqn:Voronoi cell}
    V_X(T_x):=\set{v\in\R^d\mid \mathrm{d}(v,T_x)\le\mathrm{d}(v,T_y)\text{ for every }y\in X}.
\end{equation}
For $v\in\R^d$, define
\begin{equation}
    \label{eqn:volume allocation via Voronoi tessellation}
    \phi_x(v\mid X):=\frac{\indicator{V_X(T_x)}(v)}{\sum_{z\in X}\indicator{V_X(T_z)}(v)}.
\end{equation}
A completely analogous argument to the proof of Lemma~\ref{lem:discrete Voronoi volume allocation} shows that:

\begin{lemma}
    \label{lem:Voronoi volume allocation}
    The family of functions~\eqref{eqn:volume allocation via Voronoi tessellation} defines a volume allocation.
\end{lemma}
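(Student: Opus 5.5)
We have to check the three properties of Definition~\ref{def:volume allocation} for the family~\eqref{eqn:volume allocation via Voronoi tessellation}, with $S=\R^d$ and $\lambda$ the Lebesgue measure (which is clearly $\bL$-invariant). First, measurability and well-definedness. For fixed $X$ and $x$, the function $v\mapsto\mathrm{d}(v,T_x)$ is $1$-Lipschitz. Since $X$ is admissible, the tiles have disjoint interiors, and each $T_i$ is bounded; the standing assumption is that tiles have non-empty interior. Hence only finitely many tiles meet any ball, so $V_X(T_x)$ is closed and therefore measurable. The denominator $\sum_z\indicator{V_X(T_z)}(v)$ is at least $1$ everywhere, because the infimum $\inf_{y}\mathrm{d}(v,T_y)$ is attained: only finitely many tiles lie within distance $\mathrm{d}(v,T_x)+1$ of $v$. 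The same finiteness shows the denominator is also finite. Given this, partition of unity holds pointwise by construction.

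Translation covariance is immediate. For $t\in\bL$ we have $T_{t+x}=t+T_x$, and the Euclidean metric is translation invariant. Hence $V_{t+X}(T_{t+x})=t+V_X(T_x)$, and so $\phi_{t+x}(t+v\mid t+X)=\phi_x(v\mid X)$ for every $v$.

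The main step is lower semi-continuity, which I will handle as in Lemma~\ref{lem:discrete Voronoi volume allocation}. Fix $v\in\R^d$, $X$, $x\in X$, and $X_n\to X$; we may assume $x\in X_n$ for all $n$. Set $r:=\mathrm{d}(v,T_x)$. Then $\phi_x(v\mid X_n)$ depends only on the tiles $T_y$, $y\in X_n$, with $\mathrm{d}(v,T_y)\le r$. Any tile farther than $r$ neither beats $T_x$ nor is tied with it, so it appears neither in the membership test for $V_{X_n}(T_x)$ nor in the denominator when $v\in V_{X_n}(T_x)$.

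These relevant tiles are placed at sites in the finite set
\begin{equation}
    U:=\set{u\in\bG\mid \text{there exists }1\le i\le k\text{ such that }\mathrm{d}(v,u+T_i)\le r},
\end{equation}
which is finite because the $T_i$ are bounded and $\bG$ is locally finite. Local convergence gives $X_n\cap U=X\cap U$ for all large $n$, so $\phi_x(v\mid X_n)=\phi_x(v\mid X)$ eventually. This yields lower semi-continuity everywhere, not merely $\lambda$-a.e.

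The only subtlety is the tie case, where $v$ lies on a Voronoi boundary. This is handled by including tiles at distance exactly $r$ in $U$: the non-strict inequality in the definition of $U$ is what makes the argument work. I therefore expect no genuine obstacle, only careful bookkeeping of this finiteness argument.
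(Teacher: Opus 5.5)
Your proposal is correct and takes the same approach as the paper. The paper's proof is the one-line remark that the argument is completely analogous to the discrete Voronoi case: $\phi_x(v\mid X_n)$ is determined by the restriction of $X_n$ to a fixed finite set of sites, and that restriction stabilizes under local convergence.

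One small slip: the paper never assumes the tiles have non-empty interior. You don't need that assumption, because local finiteness of $\bG$, at most one tile per site, and boundedness of the $T_i$ (which you use later anyway) already give that only finitely many tiles meet any ball.
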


\subsubsection{Other tessellations}

The discrete and continuous Voronoi tessellations mentioned above are but two examples of tessellations that can be used to construct volume allocations.
A treatment of other common tessellations in the context of stochastic geometry may be found in~\cite[Chapter~9]{chiu2013stochastic}.
It is plausible that many can be adapted to define volume allocations, providing flexibility for models with specific geometric features and constraints.
A notable example is the Delaunay tessellation (triangulation), which Mazel--Stuhl--Suhov~\cite{mazel2025high} successfully used to characterize defects in dense packings of identical disks in the context of Pirogov--Sinai theory.
While we do not attempt to verify whether their notion of a \textit{redistributed area for Delaunay triangles} fits into our framework, their work highlights the utility of alternative tessellations in the rigorous analysis of specific lattice hard-core models.

\subsection{Comparison with Jauslin--Lebowitz and He--Jauslin}

In this subsection, we show that Theorem~\ref{thm:main} strictly generalizes the crystallization criteria in~\cite{Jauslin2018} and~\cite{he2024high} by proving that the main assumptions in either paper imply Assumptions~\ref{asm:optimization} and~\ref{asm:supercell} above.
Throughout, we use the volume allocation~\eqref{eqn:volume allocation via discrete Voronoi tessellation}.

We start with~\cite{Jauslin2018}.
The setup there consisted of a lattice $\Lambda_\infty$ equipped with a metric $\Delta$ and a single particle shape $\omega\subset\R^d$, assumed to be connected and bounded.
Thus, their model corresponds to taking $\bL=\bG=\Lambda_\infty$, $T_1=\omega$, and $\mu_1=1$.
They further introduced discretized tiles as in~\eqref{eqn:discretized tile}, and defined a \emph{perfect covering} as an admissible configuration such that the union of the discretized tiles coincides with $\Lambda_\infty$.
In their main \emph{non-sliding} assumption~\cite[Definition 1]{Jauslin2018}, they postulated the existence of finitely many periodic perfect coverings related by isometries of $\Lambda_\infty$.
It follows from the definition after~\cite[(24)]{Jauslin2018} that these are the \emph{only} periodic perfect coverings by applying the definition to a singleton configuration $X$.
Hence, Assumption~\ref{asm:optimization} holds with $p_\ast$ equal to the inverse of the cardinality of a discretized tile; in particular, the perfect configurations of the model are precisely the perfect coverings.
Assumption~\ref{itm:volume screening} is clear from the definition of a perfect covering.

We now turn to~\cite{he2024high}.
The setup there consisted of a periodic graph $\Lambda_\infty$ equipped with a metric $\mathrm{d}_{\Lambda_\infty}$ and a single bounded particle shape $\omega\subset\R^d$.
They introduced the same discretized tiles as above, but continued to formulate much of their main assumption~\cite[Assumption 1]{he2024high} in terms of the discrete Voronoi cells of the discretized tiles.
This greatly simplifies the comparison with our framework due to direct correspondences between parts of their assumption and ours.
Indeed, Assumption~\ref{itm:minorization} follows from~\cite[Assumption~1.4]{he2024high}, \ref{itm:global optimization} from~\cite[Lemma~2.12]{he2024high}, and \ref{itm:volume screening} from the uniform boundedness of the discrete Voronoi cells in the perfect configurations~\cite[Lemma~3.3]{he2024high} by an analogous argument as in the proof of Lemma~\ref{lem:discrete Voronoi volume allocation}.

\subsection{Polyominoes with a finite number of tilings}
\label{sec:polyominoes}

Polyominoes are a rich class of two-dimensional tiles whose packing properties have been extensively studied in the literature~\cite{golomb1996polyominoes,grunbaum1987tilings}.
They are defined as finite unions of unit squares centered on $\Z^2$ such that every two squares in the polyomino are connected by a path of edge-adjacent squares.
Originally popularized as combinatorial puzzles, polyominoes have found significant applications in the sciences as the simplest models of self-assembly~\cite{barnes2009structure,barnes2010development,shah2022phase}.

In these contexts, the high-fugacity behavior of polyomino fluids (and mixtures) is often closely related to the structure of the maximum-density packings.
However, to say nothing of identifying the densest packings, determining whether an arbitrary set of polyominoes can tile the plane is already \textit{computationally undecidable}~\cite{golomb1970tiling}, though some general criteria such as the extension theorem~\cite[Theorem 3.8.1]{grunbaum1987tilings} exist, and an exhaustive enumeration procedure known as \textit{backtracking} can be used to systematically search for tilings~\cite{golomb1996polyominoes,martin1991polyominoes}.
The latter, for instance, allows one to check that the polyominoes in Figures~\ref{fig:Z pentomino crystals} and~\ref{fig:other shapes} admit the indicated number of tilings.
Following~\cite{martin1991polyominoes}, we designate a polyomino as \textbf{$p$-poic}, or simply \textbf{polypoic}, if it admits $p$ distinct tilings up to translations and rotations, and \textbf{$m$-morphic}, or simply \textbf{polymorphic}, if it admits $m$ distinct tilings up to translations, rotations, and reflections.

Let $T$ be a polypoic or polymorphic polyomino.
Following the notation of Section~\ref{sec:the model}, we set $\bL=\bG=\Z^2$, and let $T_1,\dots,T_k$ be the translationally inequivalent images of $T$ under rotations, and reflections in the polymorphic case, of $\Z^2$, such that the leftmost unit square in the bottom row of each $T_i$ is centered at the origin.
We associate to each $T_i$ the same chemical potential by setting $\vec{\mu}=\frac{1}{\sqrt{k}}(1,\dots,1)$.
Hence, for chiral polyominoes, we are considering the uniform mixture of both enantiomers.
The main result of this subsection is as follows.

\begin{corollary}
    \label{cor:polyominoes}
    On $\Z^2$, every polypoic polyomino and the uniform mixture of every chiral polymorphic polyomino crystallize at low temperatures in the sense of Theorem~\ref{thm:main}.
\end{corollary}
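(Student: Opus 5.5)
To prove Corollary~\ref{cor:polyominoes}, we apply Theorem~\ref{thm:main} with the discrete Voronoi volume allocation~\eqref{eqn:volume allocation via discrete Voronoi tessellation} on $\bG=\Z^2$ with the counting measure $\lambda$. By Lemma~\ref{lem:discrete Voronoi volume allocation}, this is a volume allocation. The metric $\mathrm{d}$ will be chosen as the graph distance on the square lattice, either $\ell^1$ or $\ell^\infty$; we fix the choice in the last step, where it matters for screening. All tiles have the same cardinality $|T|$ and the same chemical potential $\mu:=1/\sqrt{k}$. We therefore take $p_\ast:=\mu/|T|$, so that the condition $p_\ast v(x\mid X)\ge\mu_x$ becomes simply $v(x\mid X)\ge |T|$.

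For Assumption~\ref{itm:minorization}, note that the discretized tile $D_x$ lies in $V_X(D_x)$, since its points are at distance $0$ from $D_x$. The sites of $D_x$ also belong to no other Voronoi cell. Indeed, tiles have disjoint interiors, so distinct discretized tiles are disjoint. Hence every other $D_y$ is at distance at least $1$ from such a site. Consequently $\phi_x\equiv1$ on $D_x$, and $v(x\mid X)\ge |T|$.

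For Assumption~\ref{itm:global optimization}, suppose equality holds for every $x\in\overline{X}$. Then $\phi_x$ is supported exactly on $D_x$ with value $1$, so $V_{\overline{X}}(D_x)=D_x$ up to sites shared with other cells. If some site were uncovered by all tiles, it would be allocated positive weight to some $x$ outside $D_x$, contradicting equality. Hence the perfect configurations are exactly the tilings of $\Z^2$ by translates of the $T_i$, i.e.\ by the rotated (and, in the chiral mixture case, reflected) copies of $T$. By the $p$-poic or $m$-morphic hypothesis, there are finitely many tilings up to translation and the relevant isometries. Each such tiling class then contributes finitely many configurations: being periodic, it has finitely many translates modulo its period, and finitely many images under the finite point group. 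Periodicity itself I would take from the cited enumerations, or deduce from finiteness directly, as in the proof of Proposition~\ref{prop:perfect iff periodic ground state}. Thus there are finitely many perfect configurations.

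The main obstacle is Assumption~\ref{itm:volume screening}. We choose $F$ as a common supercell of all the finitely many periodic tilings, enlarged so that it exceeds the interaction radius, i.e.\ twice the diameter of $T$. Suppose $X$ agrees with a perfect $\overline{X}$ on $\onion[1]{F}$, and let $x\in X_F$. We must show that the Voronoi cells, and the tie counts, of $x$ agree in $X$ and $\overline{X}$. In $\overline{X}$ every site is covered, so $V_{\overline{X}}(D_x)=D_x$ and $\phi_x(\cdot\mid\overline{X})=\indicator{D_x}$. In $X$, any site $s$ outside $D_x$ lies either in the tiled region near $F$, where it is covered by some $D_y$ at distance $0$ and so is excluded from $V_X(D_x)$, or far from $F$. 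For a far site, the distance to $D_x$ exceeds the distance to the tile covering the intermediate region. Making this precise requires $F$ to be large relative to $\operatorname{diam}T$, so that $D_x$ sits deep inside $\onion[1]{F}$, which is fully tiled. Then any site $s\notin\onion[1]{F}$ has a geodesic to $D_x$ that passes through covered sites of $\onion[1]{F}$, which are strictly closer to $s$; the site where the geodesic enters $\onion[1]{F}$ lies on some $D_y$ closer than $D_x$. Hence $V_X(D_x)=D_x$ and $\phi_x(\cdot\mid X)=\indicator{D_x}$. The delicate point is the strict inequality needed to exclude ties. With $\ell^1$ distance, the entry point of the geodesic lies at distance at least $1$ from $D_x$, which gives the required strict inequality $\mathrm{d}(s,D_y)<\mathrm{d}(s,D_x)$. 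Theorem~\ref{thm:main} then yields the corollary.
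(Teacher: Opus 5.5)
Your proposal is correct and follows the paper's route: the discrete Voronoi allocation with $p_\ast=1/(\sqrt{k}\,\Area{T})$, perfect configurations identified with tilings, finiteness from the polypoic/polymorphic hypothesis, and screening via locality, where you spell out the step the paper compresses into ``an analogous argument as in the proof of Lemma~\ref{lem:discrete Voronoi volume allocation}''. Two small repairs are needed: (i) $\onion[1]{F}$ need not be fully tiled in $X$ near its boundary, because the tiles covering those sites in $\overline{X}$ may be anchored outside $\onion[1]{F}$, so instead of the entry point use the last site before $D_x$ on the geodesic, which is adjacent to $D_x$ and hence covered by another tile of $X$ once $F$ is wide compared with $\mathrm{diam}\,T$ (this also handles uncovered sites inside $\onion[1]{F}$, which your case split omits); and (ii) periodicity of the tilings has to come from the enumerations, since the argument in Proposition~\ref{prop:perfect iff periodic ground state} uses finiteness of the set of perfect configurations itself, which is what you are establishing, not finiteness up to translations.
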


\begin{proof}
    We use the volume allocation~\eqref{eqn:volume allocation via discrete Voronoi tessellation}.
    Assumption~\ref{asm:optimization} holds with $p_\ast=1/(\sqrt{k}\Area{T})$ since the number of tilings is finite.
    Assumption~\ref{asm:supercell} follows by an analogous argument as in the proof of Lemma~\ref{lem:discrete Voronoi volume allocation}.
    The corollary now follows from Theorem~\ref{thm:main}.
\end{proof}

\begin{remark}
    Corollary~\ref{cor:polyominoes} admits immediate extensions to polyiamond, polyhex, and polycube fluids under the same condition that the number of tilings is finite.
    For tiles with infinitely many tilings, the high-fugacity behavior has been rigorously characterized in only a limited number of cases, including the domino~\cite{van1999absence,cohn2001variational,sheffield2005random} and the square tetromino~\cite{hadas2025columnar}.
    Tiles that do not admit any tilings can also be treated within our framework, though verifying the assumptions will require a (non-trivial) specialized analysis depending on the specific tile geometry; see~\cite{he2024high} for a few examples.
\end{remark}

\subsection{Diamond-octagon mixtures}
\label{sec:diamond-octagon mixtures}

The purpose of this subsection is to give a simple example of a \emph{bona fide} multi-component mixture, featuring visibly distinct tiles, that crystallizes at low temperatures.
Motivated by the extensive catalog of planar tilings in~\cite{grunbaum1987tilings}, we consider a diamond-octagon mixture that gives rise to the well-known \textit{truncated square tiling}; see Figure~\ref{fig:truncated square tiling}.
A lattice version of this system is easily formulated within our framework by setting $\bL=\bG=\Z^2$, $T_1$ as the tetragon with vertices $(\pm1,0),(0,\pm1)$, and $T_2$ as the octagon with vertices $(0,-1),(1,-1),(2,0),(2,1),(1,2),(0,2),(-1,1),(-1,0)$.

\begin{figure}[t]
    \centering
    \begin{subfigure}{0.3\columnwidth}
        \centering
        \includegraphics[scale=0.5]{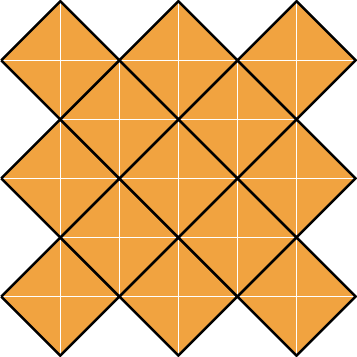}
        \caption{}
    \end{subfigure}
    \begin{subfigure}{0.3\columnwidth}
        \centering
        \includegraphics[scale=0.5]{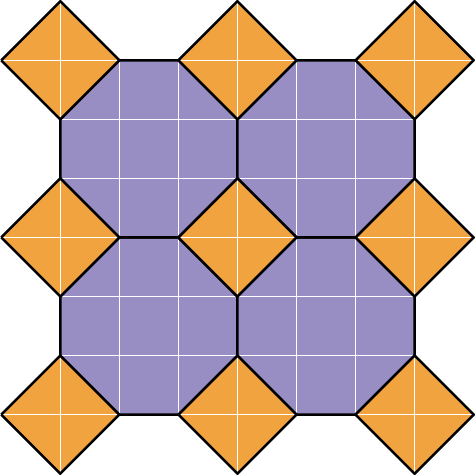}
        \caption{}
        \label{fig:truncated square tiling}
    \end{subfigure}
    \caption{The only crystal (tiling) structures of the diamond-octagon mixture on the square lattice up to translations, given the ratio of (positive) chemical potentials $\mu_{\text{diamond}}:\mu_{\text{octagon}}=2:7$.}
    \label{fig:diamond-octagon mixture}
\end{figure}

\begin{proposition}
    If $(\mu_1,\mu_2)=\frac{1}{\sqrt{53}}(2,7)$, then Assumptions~\ref{asm:optimization} and~\ref{asm:supercell} hold for the diamond-octagon mixture with~\eqref{eqn:volume allocation via Voronoi tessellation} as the volume allocation and $p_\ast=1/\sqrt{53}$.
    Consequently, the model crystallizes at low temperatures in the sense of Theorem~\ref{thm:main}.
    In particular, the model admits exactly two classes of perfect configurations up to translations of $\Z^2$, one given by the truncated square tiling, and the other by the regular square tiling with diamonds only; see Figure~\ref{fig:diamond-octagon mixture}.
\end{proposition}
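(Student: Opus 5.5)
The plan is to verify Assumptions~\ref{asm:optimization} and~\ref{asm:supercell} directly from the geometry of the Voronoi allocation~\eqref{eqn:volume allocation via Voronoi tessellation}, which is a volume allocation by Lemma~\ref{lem:Voronoi volume allocation}, and then apply Theorem~\ref{thm:main}. The key observation is that the ratio $\mu_i/\Area{T_i}$ is the same for both tiles. The diamond $T_1$ has area $2$. The octagon $T_2$ is the square $[-1,2]^2$ with four corner triangles of area $\tfrac12$ removed, so it has area $7$. Hence $\mu_i=p_\ast\Area{T_i}$ with $p_\ast=1/\sqrt{53}$, and Assumption~\ref{itm:minorization} reduces to the inequality $v(x\mid X)\ge\Area{T_x}$.

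\textbf{Local optimization.} Every interior point of $T_x$ is at distance $0$ from $T_x$ and at positive distance from every other tile, by admissibility. So $T_x\subset V_X(T_x)$ up to a null set, and $\phi_x=1$ a.e.\ on $T_x$. This gives $v(x\mid X)\ge\Area{T_x}$.

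\textbf{Characterizing perfect configurations.} Equality $p_\ast v(x\mid X)=\mu_x$ for all $x$ means $\phi_x$ vanishes a.e.\ off $T_x$ for every $x$. By partition of unity~\eqref{eqn:partition of unity}, the tiles then cover $\R^2$ up to a null set, so $X$ is a tiling. Conversely, every tiling is clearly perfect. It therefore remains to show that, with centers in $\Z^2$, there are only finitely many tilings, namely the two classes in Figure~\ref{fig:diamond-octagon mixture}. This classification is the main step.

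I would argue by vertex angles. All interior angles of the octagon are $135^\circ$ and all angles of the diamond are $90^\circ$. The tiles are convex polygons, so at a vertex of a tile the angles of tiles having a corner there must fill $360^\circ$, or $180^\circ$ if the point lies in the interior of another tile's edge. The only solutions are
\begin{itemize}[label={}]
\item $135+135+90=360$,
\item $90+90+90+90=360$,
\item $90+90=180$.
\end{itemize}
The combination $135+45$ is impossible because no tile has a $45^\circ$ angle.

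Next I would show that the octagon's axis-parallel unit edges cannot lie against a diamond, since diamonds have only diagonal edges. Each such edge must therefore be shared with another octagon. At each octagon vertex, two octagons and one diamond meet. Propagating this forcing outward from a single octagon fixes the neighbors step by step and yields the truncated square tiling, with finitely many lattice translates.

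If there is no octagon, the configuration is a tiling by diamonds, and here I would rule out sliding. Diamonds sharing an edge of direction $(1,1)$ form rows along that direction, with consecutive centers differing by $(1,1)$. An adjacent row is offset by $(1,-1)+s(1,1)$, and requiring its centers to lie in $\Z^2$ forces $s\in\Z$. So the adjacent row coincides with the one obtained for $s=0$, and the diamond tiling is unique up to its two lattice translates. In total there are finitely many perfect configurations, which gives Assumption~\ref{itm:global optimization}.

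\textbf{Screening.} For Assumption~\ref{itm:volume screening}, take $F$ to be a common supercell of both tilings, large compared with the tile diameter, for instance a suitable multiple of the period. Suppose $F$ is $\overline{X}$-correct with respect to $X$, and let $x\in X_F$. Then $X$ agrees with $\overline{X}$ on $\onion[1]{F}$, so the tiles of $\overline{X}$ meeting a neighborhood of $T_x$ are all present in $X$ and cover that neighborhood.

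Every point $s$ outside $T_x$ but near it lies in some tile $T_y$ of $X$ with $y\ne x$, so $\mathrm{d}(s,T_y)=0<\mathrm{d}(s,T_x)$ for a.e.\ such $s$. Points far from $T_x$ are excluded because $\phi_x(s\mid X)=0$ whenever some tile of $X$ is closer to $s$ than $T_x$. The only needed check is that such a tile exists within a bounded distance, which holds because $X$ agrees with a tiling near $F$. Hence $\phi_x(\cdot\mid X)=\indicator{T_x}=\phi_x(\cdot\mid\overline{X})$ a.e.

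With both assumptions verified, Theorem~\ref{thm:main} applies. I expect the only genuinely delicate part to be the exhaustive local forcing argument in the classification, especially checking that no mixed configuration of octagons and diamonds other than the truncated square tiling can arise.
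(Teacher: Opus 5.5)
Your proposal is correct and follows essentially the same route as the paper. You get local optimization from $T_x\subset V_X(T_x)$, classify perfect configurations by a local forcing argument propagated outward from a single octagon (axis-parallel edges must abut octagons, diagonal edges diamonds) with the diamond-only case handled separately, and get screening because a tile surrounded by a tiling has Voronoi cell equal to itself. The variations are presentational: you first reduce perfect configurations explicitly to exact tilings via partition of unity, and you exclude an octagon across a diagonal edge by vertex-angle counting ($135^\circ+135^\circ+90^\circ$), which is the same obstruction the paper phrases as an axis-parallel unit square that no tile can fill; you also spell out the no-sliding and screening steps that the paper leaves as clear.
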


\begin{proof}
    Assumption~\ref{itm:minorization} is clear from the definition~\eqref{eqn:Voronoi cell} of a Voronoi cell, which has at least as much volume as the tile itself.
    For Assumption~\ref{itm:global optimization}, we seek admissible configurations in which every diamond has allocated volume exactly $2$ and every octagon exactly $7$.
    If there is no octagon in the configuration, then the only way to achieve this is to have a regular square tiling by diamonds.
    Hence, we may assume that the configuration contains at least one octagon, without loss of generality, located at the origin and denoted by $O$.
    We claim that the four triangular connected components of $[-1,2]^2\setminus O$ must each be covered by a diamond. 
    Indeed, if any of these components is covered by another octagon $O'$, then $O$ and $O'$ border a common unit square of $\Z^2$ that cannot be completely covered by a diamond or an octagon, so the Voronoi cell of $O$ has strictly greater volume than $7$, which is forbidden.
    We have thus accounted for the diagonal edges of $O$.
    Next, it is easy to see that each vertical and horizontal edge of $O$ must border another octagon. 
    Repeating the above argument for the adjacent octagons, we conclude that the configuration must be a truncated square tiling.
    Assumption~\ref{itm:volume screening} is now clear.
\end{proof}

\subsection*{Acknowledgments}

We thank Izabella Stuhl for answering a question about the statement of~\cite[Model Assumptions (i)]{mazel2024pirogov} and Ian Jauslin, Joel Lebowitz, and Ron Peled for insightful questions about the work during the author's dissertation defense that led to various improvements of the note.
The author is supported by an SAS Fellowship at Rutgers University.

\bibliographystyle{plain}
\bibliography{bibliography}

\end{document}